\newcommand{\ket}[1]{\left| #1 \right\rangle}
\newcommand{\bra}[1]{\left\langle #1 \right|}
\providecommand{\kb}[1]{|#1\rangle\!\langle #1 |}
\providecommand{\tr}{\mathrm{tr}}
\providecommand{\ddf}[2]{\frac{\mathrm{d}#1}{\mathrm{d}#2}}
\providecommand{\supp}{\mathrm{supp}}
\providecommand{\RPetz}{\RRP_{\sigma,\EE}}
\providecommand{\RRP}{\RR^{\mathrm{P}}}
\providecommand{\RP}{\hat{R}^{\mathrm{P}}}
\providecommand{\RR}{\mathcal{R}}
\providecommand{\PP}{\mathcal{P}}
\providecommand{\LL}{\mathcal{L}}
\providecommand{\HH}{\mathcal{H}}
\providecommand{\KK}{\mathcal{K}}
\providecommand{\EE}{\mathcal{E}}
\providecommand{\II}{\mathcal{I}}
\providecommand{\VV}{\mathcal{V}}
\providecommand{\XX}{\mathcal{X}}
\providecommand{\YY}{\mathcal{Y}}
\providecommand{\Pos}{\mathrm{Pos}}
\providecommand{\Proj}{\hat{\Pi}}
\newtheorem{theorem}{Theorem}
\newtheorem{corollary}{Corollary}
\begin{document}

\title{Optimality Condition for the Petz Map}

\author{Bikun Li}
\email{bikunli@uchicago.edu}
\affiliation{Pritzker School of Molecular Engineering, University of Chicago, Chicago, Illinois 60637, USA}
\author{Zhaoyou Wang}
% \email{}
\affiliation{Pritzker School of Molecular Engineering, University of Chicago, Chicago, Illinois 60637, USA}
\author{Guo Zheng}
% \email{=}
\affiliation{Pritzker School of Molecular Engineering, University of Chicago, Chicago, Illinois 60637, USA}
\author{Yat Wong}
% \email{=}
\affiliation{Pritzker School of Molecular Engineering, University of Chicago, Chicago, Illinois 60637, USA}
\author{Liang Jiang}
\email{liangjiang@uchicago.edu}
\affiliation{Pritzker School of Molecular Engineering, University of Chicago, Chicago, Illinois 60637, USA}

\newcommand{\revise}[1]{{\textcolor{blue!70!green}{#1}}}
% \date{\today}

\begin{abstract}
In quantum error correction, the Petz map serves as a perfect recovery map when the Knill-Laflamme conditions are satisfied. Notably, while perfect recovery is generally infeasible for most quantum channels of finite dimension, the Petz map remains a versatile tool with near-optimal performance in recovering quantum states. This work introduces and proves, for the first time, the necessary and sufficient conditions for the optimality of the Petz map in terms of entanglement fidelity. In some special cases, the violation of this condition can be easily characterized by a simple commutator that can be efficiently computed. We provide multiple examples that substantiate our new findings.  
\end{abstract}

\maketitle

Protecting fragile quantum information from noisy environments is a critical challenge in quantum information processing, as the quantum advantage over classical systems largely relies on maintaining nontrivial quantum states with high fidelity. Over the past two decades, quantum error correction (QEC) has emerged as an essential tool for shielding quantum systems against noise~\cite{ShorCode, GKP2001, lidar2013QEC}. Its applications span various physical platforms and promise architectures that extend beyond fault-tolerant quantum computers~\cite{Gottesman1998TheoryofFTQC,Grimsmo2021PRX_QECGKP}. In addition, QEC has established itself as a cornerstone in other fields such as reliable quantum communication~\cite{Bennett1996PRA,Quantum_repeaters_2023} and high-precision quantum sensing~\cite{Reiter2017, Rojkov2022PRL}. The fundamental principle of quantum error-correcting codes involves encoding logical quantum states with redundancies. After transmission through a noise channel, a decoding or recovery operation is performed to suppress errors, preserving the coherence of the logical information. The conception, benchmarking, and implementation of concrete QEC schemes have become focal points of research in quantum information science in recent years~\cite{Xu2024, Ni2023, Sivak2023, deNeeve2022, Postler2022,Cai2024}. 

A foundational work regarding QEC is the Knill-Laflamme~(KL) conditions~\cite{KLCondition1997}, which mathematically present the necessary and sufficient conditions for faithful recovery of encoded quantum information. Under the KL conditions, it is well-known that the \textit{Petz map} can perfectly recover the encoded quantum information~\cite{Barnum_Knill_2002}. However, the KL conditions are commonly violated in most finite quantum codes and physical noise channels. In these cases, the optimal recovery map can be found using numerical approaches~\cite{Reimpell_Werner_2005PRL, Fletcher2007PRA}. Essentially, since the Petz map approximates the reverse of a quantum channel by the analog of Bayes' theorem, it still guarantees near-optimal performance when perfect quantum state restoration is impossible~\cite{Barnum_Knill_2002, Ng_AQECcond_2010PRA, zheng2024nearoptimal}. The approach of Petz map has been applied to various areas involving the reconstruction of quantum information, such as approximate quantum error correction~\cite{Ng_AQECcond_2010PRA}, many-body open quantum systems~\cite{AWWW2018PRA,Aw2021_AVS,Hu2024_PRB,Kwon2019PRX,Kwon2022PRL,sang2024,Parzygnat2023axiomsretrodiction}, and black hole physics~\cite{Chen2020, Cree2022, Nakayama2023}, due to its versatility and many other appealing features.

This work investigates the sufficient and necessary conditions for the Petz map to be the optimal recovery map for QEC, even when the KL conditions are violated. These conditions enable efficient verification of optimality without the need to run any optimization algorithm. 
We highlight that the optimality of the Petz map has been studied for different objective functions~\cite{Iten2017TIT, bai2024quantumbayesrulepetz}.  
Our work mainly focuses on the entanglement fidelity, which generalizes some results from \cite{Iten2017TIT} with a different route of proof.
Moreover, we provide many interesting quantum channels that satisfy our criterion. In conjunction with recent work~\cite{zheng2024nearoptimal}, our findings make analytical estimation of the optimal entanglement fidelity for these quantum channels possible.

\begin{figure}[t]
    \centering
    \includegraphics[width=0.7\linewidth]{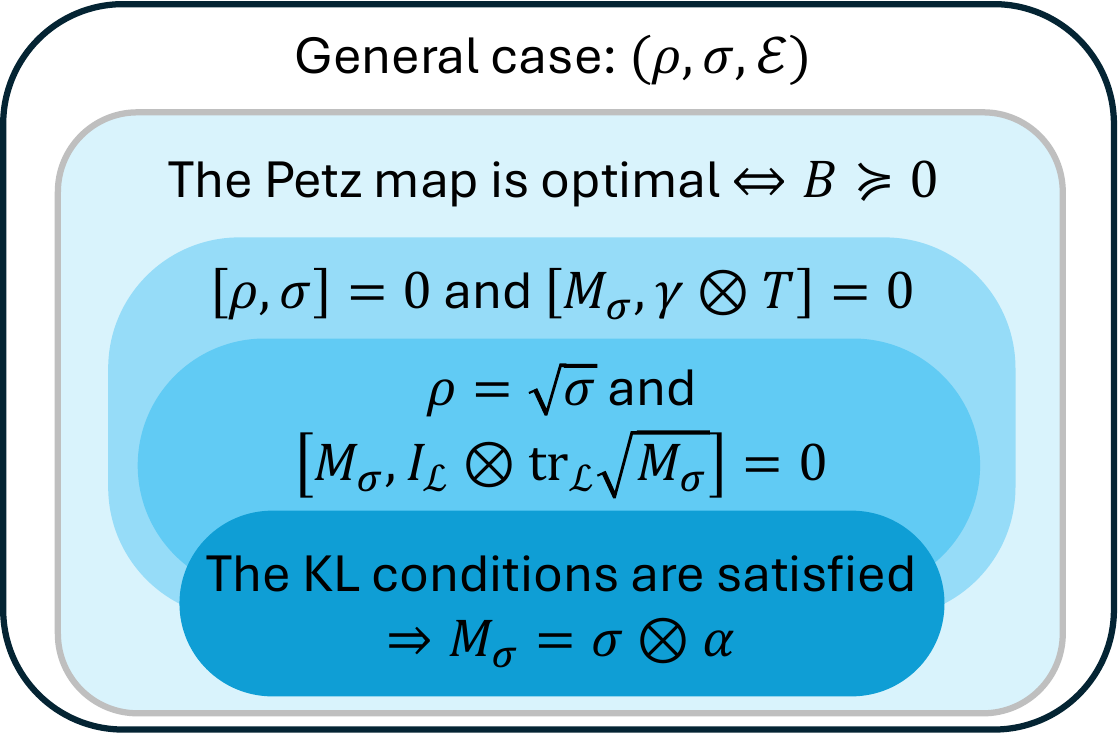}
    \caption{This Venn diagram illustrates the generalization of the Knill-Laflamme conditions, in the sense of the Petz map optimizing $F_e$. Our result applies to the general case with input state $\rho$, the reference state $\sigma$, and the noise channel $\EE$. The blue regions show that the Petz map achieves optimality under different assumptions.}
    \label{fig:venn}
\end{figure}

\textit{Preliminary.} --- 
Generally, the QEC process involves encoding a $d$-dimensional "logical" Hilbert space $\LL:=\mathbb{C}^d$ into an $n$-dimensional "physical" space $\HH:=\mathbb{C}^n$. The encoded state residing in $\HH$ subsequently experiences noise, which may corrupt the original quantum information. In this work, we denote this \textit{composite} quantum operation by $\EE$, which has the Kraus representation $\mathcal{E}\sim\{\hat{E}_k\}$ with Kraus operators $\hat{E}_k:\LL\to\HH$ for $1\le k\le n_K$. We call $\EE$ a quantum channel, but it can be generalized to the trace nonincreasing case.
$\{\hat{E}_k\}$ induces an $n_K$-dimensional Hilbert space $\KK:=\mathbb{C}^{n_K}$, which is considered as the "environment" required to define the complementary channel of $\EE$.

Given a Hilbert space $\VV$, $\Pos(\VV)$ denotes the space of all positive semidefinite (PSD) operators acting on $\VV$. The notation $I_{\VV}$ represents the identity matrix (or projector) acting on $\VV$. If $\VV$ corresponds to a subsystem, then $\tr_{\VV}(\cdots)$ represents the partial trace over the subspace $\VV$.

We define $M_\sigma\in\Pos(\LL \otimes \KK)$ as the QEC matrix associated with the reference state $\sigma\in\Pos(\LL)$ and the noise channel $\EE$, with matrix components given by
\begin{equation}\label{eq:QECmat_ref}
    (M_\sigma)_{[\mu k],[\nu \ell]}:=\bra{\mu}\sqrt{\sigma}\hat{E}_k^\dagger\hat{E}_\ell\sqrt{\sigma}\ket{\nu}\;.
\end{equation}
Here, $\ket{\mu},\ket{\nu}$ denote orthonormal basis vectors in $\LL$. We refer specifically to the special case $M:=M_{I_\LL}$ as the QEC matrix, which has been demonstrated to be a powerful tool in prior works~\cite{Albert2018PRA_QECMatrix,zheng2024nearoptimal,zheng2024performanceachievableratesGKP}.
The generalized definition in Eq.~\eqref{eq:QECmat_ref} becomes crucial in our subsequent discussion of the Petz map with the reference state $\sigma$.
Note that $M^T$ can also be interpreted as the Choi matrix of  $\EE^{\mathrm{c}}$, which is the complementary channel of $\EE$~\cite{Beny2010AQEC, SuppMat}. 

In this work, an unknown logical input state is represented by a general mixed state $\rho\in\Pos(\LL)$.
To recover $\rho$ from the noise corruption, the error correction process is followed by a recovery map $\RR\sim\{\hat{R}_i\}$, where $\hat{R}_i:\HH\to \LL$ for $i = 1,2,\cdots$. 
Maintaining the entanglement of $\rho$ with other subsystems is crucial in QEC for practical applications. Therefore, in this study, the entanglement fidelity serves as the performance metric~\cite{Schumacher1996} for the composite map $\RR\circ \EE$, defined as 
\begin{equation}\label{eq:entfidelity} 
    \begin{aligned}
    F_{e}(\rho, \RR\circ \EE)&:=\bra{\psi_\rho}\II\otimes \left(\RR\circ \EE\right)(\psi_\rho)\ket{\psi_\rho}\\&=\sum_{i,k}\left|\tr \big(\hat{R}_i\hat{E}_k\rho\big) \right|^2\;,
    \end{aligned} 
\end{equation} 
where $\ket{\psi_\rho} \in \LL \otimes \LL$ is the state vector representing the canonical purification of $\rho$, $\psi_\rho \equiv \kb{\psi_\rho}$, and $\II$ is the identity map acting on the reference subsystem $\LL$. 
When $\rho$ is a maximally mixed state in $\LL$, $F_{e}$ is closely related to the average fidelity~\cite{PhysRevA.60.1888}, which can be conveniently estimated via quantum tomography. Additionally, $\min_\rho F_e$ bounds the diamond distance between $\RR\circ\EE$ and $\II$ by the Fuchs–van de Graaf inequalities~\cite{PhysRevA.71.062310}.

The optimality of a recovery map is usually based on the definition of the objective function, which may differ across contexts. 
In this work, a recovery map $\RR$ is called \textit{optimal} if it achieves $F_{e} = F^{\mathrm{op}}_{e} := \max_{\RR} F_{e}(\rho,\RR\circ \EE)$ for a given $\EE$. An optimal recovery map $\RR$ is further termed \textit{perfect} if it achieves $F_{e}=1$. This perfection occurs if and only if the KL conditions~\cite{KLCondition1997} are satisfied:
\begin{equation}\label{eq:KLcondition}
    M=I_\LL\otimes \alpha\;,
\end{equation}
where $\alpha\in \Pos(\KK)$ satisfies $\tr(\alpha) = 1$.
Eq.~\eqref{eq:KLcondition} implies that $M_\sigma = \sigma \otimes \alpha$.
In this situation, a universal solution for perfect recovery is the Petz map $\RPetz$ of the \textit{reference state} $\sigma$~\cite{Petz1988,Barnum_Knill_2002}. Specifically, $\RPetz$ is a completely positive, trace nonincreasing map with Kraus operator 
\begin{equation}\label{eq:Petz_def}
    (\hat{R}^{\mathrm{P}}_{\sigma,\EE})_k := \sigma^{\frac{1}{2}}\hat{E}_k^\dagger \EE(\sigma)^{-\frac{1}{2}}\;,
\end{equation}
where negative powers of operators or matrices are defined via the pseudoinverse throughout this work. 
The role of $\sigma$ is to ensure that when $\sigma = \rho$, we have $\RRP_{\rho,\EE} \circ \EE(\rho) = \rho$, although this condition does not necessarily optimize $F_e$.
Nonetheless, the Petz map guarantees near-optimal performance with $(F^{\mathrm{op}}_e)^2 \le F_e(\rho,\RRP_{\rho,\EE}\circ \EE) \le F^{\mathrm{op}}_e$~\cite{Barnum_Knill_2002}.
When the reference state $\sigma$ in the Petz map need not coincide with the input state $\rho$, typically it is necessary to have $\supp(\rho)\subseteq\supp(\sigma)$ to maximize $\tr(\RPetz\circ\EE(\rho))$.
If the reference state $\sigma$ is a maximally mixed state that has the largest support, then the Petz map becomes the \textit{transpose channel} (TC), denoted as $\RR^{\mathrm{TC}}$~\cite{Ng_AQECcond_2010PRA}.  
If the input state $\rho$ is a maximally mixed state, the entanglement fidelity reduces to the \textit{channel fidelity}. Remarkably, when both $\rho$ and $\sigma$ are maximally mixed states, the channel fidelity given by TC has a simple form~\cite{zheng2024nearoptimal}:
\begin{equation}\label{eq:F_e_TC}
    F^{\mathrm{TC}}_e = \frac{1}{d^2}\left\|\tr_\LL\sqrt{M}\right\|^2_F\;,
\end{equation}
where $\|\cdots\|_F$ denotes the Frobenius norm.

\textit{Main result.} ---
By leveraging the definition of the QEC matrix, we arrive at a compact expression for $F_e$ (Theorem~\ref{thm:ent_fid_petz}), where the recovery map is the Petz map with an arbitrary reference state. All detailed proofs of theorems and corollaries in this work are provided in Supplemental Material~\cite{SuppMat}.
\begin{theorem} \label{thm:ent_fid_petz}
    Let $\rho, \sigma\in\Pos(\LL)$, the entanglement fidelity associated with the input state $\rho$, and the quantum channel $\Phi=\RPetz\circ\EE$ has a compact form given by
    \begin{equation}\label{eq:ent_fid_petz}
        F_e(\rho,\Phi) = \left\|\tr_{\LL}\left(M_\sigma^{-\frac{1}{2}}\left(\sqrt{\sigma}\otimes I_{\KK} \right)M\left(\rho\otimes I_{\KK}\right)\right)\right\|^2_F\;.
    \end{equation}
\end{theorem}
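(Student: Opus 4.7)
The plan is to recognize the operator inside the partial trace on the right-hand side as a factored product of the Petz map's Stinespring isometry with a ``stacked'' Kraus map, so that the partial trace and Frobenius norm directly produce the matrix of amplitudes $T_{k\ell} := \tr(\hat R_k\hat E_\ell\rho)$ appearing in $F_e$ (where $\hat R_k$ denotes the $k$-th Kraus operator of $\RPetz$).

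I would first introduce two bookkeeping operators $\tilde W, W : \LL\otimes\KK \to \HH$ defined by $\tilde W(\ket{\mu}\otimes\ket{k}) := \hat E_k\ket{\mu}$ and $W := \tilde W(\sqrt{\sigma}\otimes I_\KK)$. Direct computation then yields the basic identities $M = \tilde W^\dagger \tilde W$, $M_\sigma = W^\dagger W$, $\EE(\sigma) = WW^\dagger$, and $(\sqrt\sigma\otimes I_\KK)\,M = W^\dagger \tilde W$, the last following from taking the adjoint of $W = \tilde W(\sqrt\sigma\otimes I_\KK)$.

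The technical heart of the argument is the polar-decomposition identity
\begin{equation*}
    M_\sigma^{-1/2}\,W^\dagger \;=\; W^\dagger\,\EE(\sigma)^{-1/2},
\end{equation*}
which I would verify via the singular value decomposition $W = \sum_i s_i\ket{u_i}\bra{v_i}$, both sides collapsing to $\sum_i \ket{v_i}\bra{u_i}$. I would then note that this common operator is precisely the Stinespring isometry $U_P : \HH \to \LL\otimes\KK$ of $\RPetz$: starting from $W^\dagger\ket{\phi} = \sum_k \sqrt{\sigma}\hat E_k^\dagger\ket{\phi}\otimes\ket{k}$, one reads off $U_P\ket{\phi} = W^\dagger\EE(\sigma)^{-1/2}\ket{\phi} = \sum_k \hat R_k\ket{\phi}\otimes\ket{k}$.

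Combining these pieces gives $B := M_\sigma^{-1/2}(\sqrt\sigma\otimes I_\KK)M(\rho\otimes I_\KK) = U_P\,\tilde W(\rho\otimes I_\KK)$. Evaluating on basis vectors, $B\ket{\nu,\ell} = U_P\hat E_\ell\rho\ket{\nu} = \sum_k (\hat R_k\hat E_\ell\rho\ket{\nu})\otimes\ket{k}$, hence $\bra{\mu,k}B\ket{\nu,\ell} = \bra{\mu}\hat R_k\hat E_\ell\rho\ket{\nu}$ and $(\tr_\LL B)_{k\ell} = \tr(\hat R_k\hat E_\ell\rho) = T_{k\ell}$. Taking the squared Frobenius norm then yields $\sum_{k,\ell}|T_{k\ell}|^2$, which equals $F_e(\rho,\Phi)$ per \eqref{eq:entfidelity}. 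The main obstacle is clean handling of the pseudoinverses when $\EE(\sigma)$ fails to be full rank on $\HH$; the SVD verification of the polar identity bypasses this transparently, and the same argument shows that no side condition $\supp(\rho)\subseteq\supp(\sigma)$ is needed for the equality to hold.
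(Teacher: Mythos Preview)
Your proof is correct and follows essentially the same mathematical route as the paper's: the paper's key diagrammatic identity for $\EE(\sigma)^{q}$ is precisely the relation $\EE(\sigma)^{q}=W M_\sigma^{q-1}W^\dagger$ in your notation, which for $q=-\tfrac12$ is your polar-decomposition identity $M_\sigma^{-1/2}W^\dagger=W^\dagger\EE(\sigma)^{-1/2}$. The only difference is presentational---you carry out the manipulations algebraically via the stacked-Kraus maps $\tilde W,W$ and an explicit SVD, whereas the paper does the equivalent contractions in Penrose graphical notation; your version has the minor advantage of making the pseudoinverse bookkeeping transparent. (One cosmetic remark: you call the operator inside the partial trace $B$, which clashes with the paper's later use of $B$ in Theorem~\ref{thm:condition}.)
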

We emphasize that Theorem~\ref{thm:ent_fid_petz} does not impose any matrix support requirements on $\rho$ and $\sigma$. Eq.~\eqref{eq:ent_fid_petz} directly generalizes the result from the previous work, Eq.~\eqref{eq:F_e_TC}, which specifically assumes $\rho=\sigma=I_\LL/d$. 
For the general case with $\supp(\rho)\subseteq \supp(\sigma)$, we can define
\begin{equation}\label{eq:T_def}
    T:=\tr_\LL\left(\left(\gamma^\dagger \otimes I_\KK\right)\sqrt{M_\sigma}\right)\;,
\end{equation}
with $\gamma := \sigma^{-\frac{1}{2}}\rho$, and reduce Eq.~\eqref{eq:ent_fid_petz} to $F_e = \|T\|_F^2$.
If the KL conditions (Eq.~\eqref{eq:KLcondition}) are further applied, the simplified $F_e$ equals $1$ regardless of the choice of $\sigma$ and $\rho$.
% Under this condition, the Petz map $\RPetz$ is always optimal and perfect. 

Although no known closed-form expression exists for the optimal recovery map when the KL conditions are violated, we can still identify the complete family of triplets $(\rho, \sigma, \EE)$ for which the Petz map is precisely the optimal recovery map. Our result is summarized in Theorem~\ref{thm:condition} (also see Fig.~\ref{fig:venn}).
\begin{theorem}\label{thm:condition}
    Let $\rho, \sigma\in\Pos(\LL)$ with $\supp(\rho) \subseteq \supp(\sigma)$, and let $M_\sigma$ be the QEC matrix associated with the reference state $\sigma$ and the noise channel $\EE$. 
    Then, $\RR = \RPetz$ optimizes  $F_e(\rho, \RR\circ\EE)$ if and only if 
    \begin{equation}\label{eq:B_def}
        B:=\sqrt{M_\sigma}\left(\gamma\otimes T\right)\succeq 0\;.    
    \end{equation}
\end{theorem}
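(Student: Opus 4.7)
The plan is to cast the optimization as a semidefinite program (SDP) and apply duality. The entanglement fidelity $F_e(\rho,\RR\circ\EE)$ is a linear functional of the Choi matrix $J_\RR\in\Pos(\LL\otimes\HH)$ of $\RR$, so the maximization over completely positive, trace-nonincreasing maps becomes the SDP ``maximize $\tr(W J_\RR)$ subject to $J_\RR\succeq 0$ and $\tr_\LL(J_\RR)\preceq I_\HH$,'' for some PSD operator $W$ determined by $\EE$ and $\rho$. Slater's condition is trivially satisfied, so strong duality holds and the KKT characterization applies: a feasible $J_\RR$ is optimal iff there exists $Y\in\Pos(\HH)$ such that $I_\LL\otimes Y\succeq W$, $(I_\LL\otimes Y-W)J_\RR=0$, and $Y(I_\HH-\tr_\LL J_\RR)=0$.

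I would then specialize these three conditions to the Petz Choi matrix $J_{\RPetz}$. The Kraus identity $\sum_k(\RP_k)^\dagger\RP_k=\hat{\Pi}_{\EE(\sigma)}$, with $\hat{\Pi}_{\EE(\sigma)}$ denoting the projector onto $\supp\EE(\sigma)$, converts the last slackness condition into $\supp Y\subseteq\supp\EE(\sigma)$. For the remaining two conditions, I introduce the auxiliary operator $\hat{L}:\LL\otimes\KK\to\HH$ defined by $\hat{L}(\ket{\nu}\otimes\ket{k}):=\hat{E}_k\sqrt{\sigma}\ket{\nu}$, which satisfies the clean identities $\hat{L}^\dagger\hat{L}=M_\sigma$ and $\hat{L}\hat{L}^\dagger=\EE(\sigma)$. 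Through the polar decomposition of $\hat{L}$, both $W$ and $J_{\RPetz}$ factorize into expressions built from $\hat{L}$, $\gamma=\sigma^{-1/2}\rho$, and $\sqrt{M_\sigma}$; the middle slackness $(I_\LL\otimes Y-W)J_{\RPetz}=0$ then pins down $Y$ in closed form, with the matrix $T$ of Eq.~\eqref{eq:T_def} appearing naturally.

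Substituting this $Y$ back into the matrix inequality $I_\LL\otimes Y\succeq W$ and conjugating through $\hat{L}$ translates the PSD condition from $\LL\otimes\HH$ onto $\LL\otimes\KK$, where collecting factors of $\sqrt{M_\sigma}$, $\gamma$, and $T$ leads to exactly $B=\sqrt{M_\sigma}(\gamma\otimes T)\succeq 0$; the converse direction reverses this construction, using a given $B\succeq 0$ to build a dual-feasible $Y$ and certify Petz optimality. The main obstacle is this final algebraic reduction: $B$ is a product of generally non-Hermitian operators, so its positivity must encode both hermiticity (a nontrivial consistency condition) and the PSD inequality simultaneously, and pseudoinverses and support projectors must be tracked carefully when $\sigma$ or $\EE(\sigma)$ is rank-deficient, though the assumption $\supp\rho\subseteq\supp\sigma$ ensures $\gamma$ is well-defined throughout.
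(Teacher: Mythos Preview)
Your SDP/KKT setup matches the paper's route for \emph{sufficiency}: both determine the dual variable (your $Y$, the paper's $\Lambda$) from complementary slackness at the Petz Choi matrix and then verify dual feasibility under $B\succeq 0$. For \emph{necessity}, however, the paper does not use KKT. Instead it perturbs the Petz Kraus operators as $\hat R_k(\varepsilon)=(\RP_{\sigma,\EE})_k+\varepsilon\hat V^{(1)}_k+\cdots$, parametrizes $\hat V^{(1)}$ by a matrix $A$ satisfying $\Pi_{M_\sigma}(A+A^\dagger)\Pi_{M_\sigma}=0$, and reads off $B=B^\dagger$ from the first-order condition $2\,\mathrm{Re}\,\tr(AB)=0$ and $B+B^\dagger\succeq 0$ from the second-order condition. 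Your KKT-for-both-directions plan is a legitimate and arguably more unified alternative, since strong duality makes the KKT conditions necessary as well as sufficient; what it buys you is avoiding the bespoke perturbative construction, at the cost of having to run the algebraic equivalence in both directions rather than just one.

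One caution on the step you describe as ``conjugating through $\hat L$'': this is the genuine crux, and in the paper's execution it is considerably more than a conjugation. Dimensionally, $\hat L:\LL\otimes\KK\to\HH$, so $I_\LL\otimes\hat L$ maps $\LL\otimes\LL\otimes\KK$ into $\LL\otimes\HH$, not $\LL\otimes\KK$ into $\LL\otimes\HH$; a literal conjugation therefore does not land you in the right space. The paper's sufficiency argument instead inserts the projector $TT^{-1}$ (justified by a support inclusion $\supp(\hat S_{\rho^2}^\dagger\hat S_{\rho^2})\subseteq\supp(TT^\dagger)$), introduces $\widetilde B:=M_\sigma^{-1/2}B^\dagger M_\sigma^{-1/2}$ and its square root, and then exhibits $\Gamma=(X_1-X_2)(X_1-X_2)^\dagger$ after checking a nontrivial cross-term identity $X_1X_2^\dagger=X_2X_2^\dagger$ that relies on $B\Pi_{M_\sigma}=B$. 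For your necessity direction, note also that $\Lambda\succeq 0$ alone only constrains $\Pi_{M_\sigma}B\Pi_{M_\sigma}$; forcing the off-support block $B(I-\Pi_{M_\sigma})$ to vanish (so that $B$ is actually Hermitian) will require the full condition $\Gamma\succeq 0$. Your closing remark that this reduction is ``the main obstacle'' is accurate---just expect it to need a structural factorization rather than a one-line transform.
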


\textit{Proof.} -- 
We provide the outline of the proof as follows. First, we demonstrate the \textit{necessity} of $B\succeq 0$ when $\RPetz$ is optimal.
This is achieved by considering variations of the recovery map $\RR = \RR(\varepsilon)$, defined via its Kraus operators:
\begin{equation}\label{eq:Rexpand}
    \hat{R}_{k}(\varepsilon):= \big(\RP_{\sigma,\EE}\big)_k + \sum_{s=1}^\infty\varepsilon^s \hat{V}^{(s)}_k\;,
\end{equation}
where $\hat{V}^{(s)}_k$ can be carefully and flexibly chosen such that for all sufficiently small $\varepsilon$, $\hat{R}_k(\varepsilon)$ has a  bounded $2$-norm for each $k$, and $\RR(\varepsilon)$ is TP on $\supp(\EE(\sigma))$.
Assuming that $F_e(\rho, \RR(\varepsilon)\circ \EE)$ achieves its global maximum at $\varepsilon = 0$, we must have $\mathrm{d}F_e/\mathrm{d}\varepsilon = 0$ and $\mathrm{d}^2F_e/\mathrm{d}\varepsilon^2 \le 0$. From the expression of these two derivative conditions, it naturally follows that $B\succeq 0$, thereby proving the necessity. 

The \textit{sufficiency} of Theorem~\ref{thm:condition} is proved by examining the optimal conditions of the semidefinite programming problem. Briefly, finding the optimal quantum channel for recovery is equivalent to solving a convex optimization problem:
\begin{equation}
    \begin{array}{lll}
        &\text{minimize}\quad-F_{e}[C_\RR]  \\
        &\text{subject to}\quad G:=I_\HH - \tr_\LL C_\RR \succeq 0 \\
        &\qquad\qquad\quad\;\text{and}\quad C_\RR\succeq 0\;,\\
    \end{array}
\end{equation}
The objective function is the negative entanglement fidelity $F_e(\rho,\RR\circ\EE)$, which can be rewritten as a linear function $F_{e}[C_\RR]=\tr\left(C_\RR^T(\rho^T\otimes I_\HH)C_\EE (\rho^T\otimes I_\HH) \right)$ with respect to the Choi matrix $C_\RR$. 
Here, the components for the Choi matrices of $\RR$ and $\EE$ are given by $(C_{\RR})^{ab}_{\mu\nu}:=\bra{\mu}\RR\left(\ket{a}\!\bra{b}\right)\ket{\nu}$ and $(C_{\EE})^{ab}_{\mu\nu}:=\bra{a}\EE\left(\ket{\mu_{}}\!\bra{\nu_{}}\right)\ket{b}$, respectively, where $\ket{a}$ and $\ket{b}$ are orthonormal basis in $\HH$.
This work verifies the Karush–Kuhn–Tucker (KKT) conditions~\cite{Coutts2021certifying,SuppMat} by investigating the explicit solution of the Lagrange function:
\begin{equation}\label{eq:Lagrange}
    \mathscr{L}(C_\RR,\Lambda,\Gamma):= -F_e[C_\RR]- \tr(\Lambda^TG) - \tr(\Gamma^T C_\RR)\;.
\end{equation}
We find that the Lagrange multiplier solutions are given by
    \begin{align}
        \Lambda &= \tr_{\LL} \big(C_*^T\nabla_{C^T_{\RR}}F_e\big) \;, \label{eq:Lambda_soln}
        \\
        \Gamma &= I_{\LL}\otimes \Lambda - \nabla_{C_\RR^T}F_e\;, \label{eq:Gamma_soln}
    \end{align}
which are evaluated at the optimal Choi matrix $C_* = C_{\RPetz}$.
The challenge here is that for a general PSD $C_\RR$, we do not have $\Lambda\in \Pos(\HH)$ and $\Gamma \in \Pos(\LL\otimes\HH)$, which are the conditions of dual feasibility required by the KKT conditions. 
Surprisingly, $B\succeq 0$ enables $\Lambda$ and $\Gamma$ to be PSD when the recovery map is the Petz map. Together with other KKT conditions, which can be verified easily due to the linearity of $\mathscr{L}$~\cite{SuppMat}, the sufficiency of Theorem~\ref{thm:condition} is proved. 
~$\Box$

If the KL conditions are satisfied, it is easy to deduce that $B=\rho\otimes \alpha \succeq 0$, because $M$ is a tensor product matrix as displayed in Eq.~\eqref{eq:KLcondition}.
Generally, the condition $B\succeq 0$ in Theorem~\ref{thm:condition}  appears obscure and lacks physical meaning. Nevertheless, this new result can be better comprehended under some simple assumptions regarding $\rho$ and $\sigma$.
\begin{corollary}\label{cor:rhosigma_comm}
    Let $\rho, \sigma\in\Pos(\LL)$ satisfying $[\rho,\sigma]=0$ with $\supp(\rho) \subseteq \supp(\sigma)$.
    Then, $\RR = \RPetz$ optimizes  $F_e(\rho, \RR\circ\EE)$ if and only if
    \begin{equation}\label{eq:commutator_eta}
        \left[M_\sigma, \gamma\otimes T\right] = 0\;.
    \end{equation}
\end{corollary}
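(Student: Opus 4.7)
The plan is to invoke Theorem~\ref{thm:condition} and show that, under the added assumption $[\rho,\sigma]=0$, the PSD condition $B\succeq 0$ is equivalent to $[M_\sigma,\gamma\otimes T]=0$. The key preliminary step will be to verify that $\gamma\otimes T$ is itself Hermitian (in fact, PSD), after which the equivalence reduces to a standard fact about products of commuting PSD operators.

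First I would diagonalize $\rho$ and $\sigma$ simultaneously in a common orthonormal basis $\{\ket{\mu}\}$ adapted to $\supp(\sigma)$. This makes $\gamma=\sigma^{-1/2}\rho$ diagonal with nonnegative entries $\gamma_\mu\geq 0$, so $\gamma\succeq 0$. Computing $T$ in the same basis yields the matrix elements $T_{k\ell}=\sum_\mu \gamma_\mu (\sqrt{M_\sigma})_{[\mu k],[\mu\ell]}$, and each diagonal block of $\sqrt{M_\sigma}$ at a fixed $\mu$ is a PSD operator on $\KK$, since $(\bra{\mu}\otimes\bra{\phi})\sqrt{M_\sigma}(\ket{\mu}\otimes\ket{\phi})\geq 0$ for every $\ket{\phi}\in\KK$. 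Hence $T$ is a nonnegatively weighted sum of PSD matrices, so $T\succeq 0$ and therefore $\gamma\otimes T\succeq 0$.

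With $\gamma\otimes T$ Hermitian the equivalence is short. For sufficiency, $[M_\sigma,\gamma\otimes T]=0$ implies $[\sqrt{M_\sigma},\gamma\otimes T]=0$ by the functional calculus; the two commuting PSD operators are simultaneously diagonalizable, so $B=\sqrt{M_\sigma}(\gamma\otimes T)$ has nonnegative eigenvalues and Theorem~\ref{thm:condition} yields optimality of $\RPetz$. For necessity, if $\RPetz$ is optimal then the PSD operator $B$ must be Hermitian; computing $B^\dagger=(\gamma\otimes T)\sqrt{M_\sigma}$ (using Hermiticity of all three factors) and equating $B=B^\dagger$ gives $[\sqrt{M_\sigma},\gamma\otimes T]=0$, which squares to $[M_\sigma,\gamma\otimes T]=0$.

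The main obstacle I expect is establishing $T\succeq 0$, which is not transparent from the bare definition $T=\tr_\LL((\gamma^\dagger\otimes I_\KK)\sqrt{M_\sigma})$ because partial traces generally fail to preserve positivity of products. The hypothesis $[\rho,\sigma]=0$ is precisely what allows a simultaneous eigenbasis in which $\gamma$ acts diagonally on $\LL$, isolating $T$ as a nonnegative combination of diagonal blocks of $\sqrt{M_\sigma}$; without this commutativity, the off-diagonal entries of $\gamma$ would mix different blocks and the argument would break down.
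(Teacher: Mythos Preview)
Your proposal is correct and follows essentially the same route as the paper: reduce to Theorem~\ref{thm:condition}, observe that $[\rho,\sigma]=0$ makes $\gamma\otimes T$ Hermitian, and then argue that $B=\sqrt{M_\sigma}(\gamma\otimes T)\succeq 0$ is equivalent to the commutator vanishing. Your write-up is in fact more careful than the paper's one-line proof: the paper only asserts that $\gamma\otimes T$ is \emph{Hermitian}, whereas the sufficiency direction (commutator vanishes $\Rightarrow B\succeq 0$) genuinely requires $\gamma\otimes T\succeq 0$, which you establish explicitly via the simultaneous eigenbasis and the block-diagonal positivity of $\sqrt{M_\sigma}$.
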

The above commutation relation is a generalization of the optimality condition of \textit{singlet fraction} in prior work~\cite{Iten2017TIT}, where the Petz map is referred to as the ``pretty good map''.
Specifically, their singlet fraction is defined by the overlap between the bipartite state $\II\otimes (\RR\circ\EE)(\psi_\sigma)\in \Pos(\LL\otimes \LL)$ and $\ket{\psi_{I_\LL}}$, such that the objective function is proportional to $F_e(\sqrt{\sigma},\RR\circ\EE)$. 
Its optimality condition at $\RR = \RPetz$ is immediately recovered by setting $\rho = \sqrt{\sigma}$ in Eq.~\eqref{eq:commutator_eta}:
\begin{equation}\label{eq:commutator_iten}
    \left[M_\sigma, I_\LL\otimes \tr_\LL \sqrt{M_\sigma}\right]=0\;.
\end{equation}

Notice that $B\succeq 0$ is the most general condition, while Eq.~\eqref{eq:commutator_eta} is its special version with commuting $\rho$ and $\sigma$, and Eq.~\eqref{eq:commutator_iten} is a more specialized version which only optimizes $F_e$ for the input $\sqrt{\sigma}$.
We emphasize that there is no equivalence among these three conditions, making this work a meaningful generalization of previous research. 
This conclusion is substantiated with explicit examples in~\cite{SuppMat}, which satisfy either of the first two conditions but violate their special case versions.  

In practice, one may be interested in the structure of $\EE$ that enables the Petz map to maximize $F_e$. The following corollary illustrates one possible underlying structure.
\begin{corollary}\label{cor:M_sigma}
    Let $M_\sigma$ be the QEC matrix associated with the reference state $\sigma$ and the noise channel $\EE$, and $\KK = \bigoplus_s\KK_s$. If $\sqrt{M_\sigma} = \bigoplus_s \beta_s$, where $\beta_s\in\Pos(\LL\otimes \KK_s)$ satisfying $\tr_\LL \beta_s \propto I_{\KK_s}$, then $\RR = \RPetz$ optimizes $F_e(\sqrt{\sigma},\RR\circ\EE)$.
\end{corollary}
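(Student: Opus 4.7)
The plan is to reduce the statement to Corollary~\ref{cor:rhosigma_comm}. Since the input state is taken to be $\rho = \sqrt{\sigma}$, the hypotheses $[\rho,\sigma]=0$ and $\supp(\rho)\subseteq \supp(\sigma)$ are both automatic, so it suffices to verify the commutator condition $[M_\sigma, \gamma \otimes T]=0$ with $\gamma = \sigma^{-1/2}\sqrt{\sigma}$ and $T$ as in Eq.~\eqref{eq:T_def}.

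First I would unpack the two factors explicitly. Using the pseudoinverse convention, $\gamma = \sigma^{-1/2}\sqrt{\sigma}$ is just the projector $\Pi_\sigma$ onto $\supp(\sigma)$. Next, observe from Eq.~\eqref{eq:QECmat_ref} that $M_\sigma$ has the form $(\sqrt{\sigma}\otimes I_\KK)\,(\cdots)\,(\sqrt{\sigma}\otimes I_\KK)$ and therefore is supported inside $\supp(\sigma)\otimes \KK$; by functional calculus $\sqrt{M_\sigma}$ inherits the same support. This collapses $T = \tr_\LL((\gamma^\dagger\otimes I_\KK)\sqrt{M_\sigma})$ to $T = \tr_\LL \sqrt{M_\sigma}$.

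With the structural assumption $\sqrt{M_\sigma} = \bigoplus_s \beta_s$ in place, the partial trace distributes across the direct sum, yielding $T = \bigoplus_s \tr_\LL\beta_s = \bigoplus_s c_s I_{\KK_s}$ for nonnegative constants $c_s$. Consequently $\gamma \otimes T$ is block-diagonal along the decomposition $\LL\otimes \KK = \bigoplus_s \LL\otimes \KK_s$, with $s$-th block $c_s\,(\Pi_\sigma \otimes I_{\KK_s})$; likewise $M_\sigma = \bigoplus_s \beta_s^2$ shares the same block structure. The commutator therefore splits as $[M_\sigma, \gamma\otimes T] = \bigoplus_s c_s\,[\beta_s^2,\,\Pi_\sigma\otimes I_{\KK_s}]$.

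To close the argument I would note that because $\beta_s$ is supported in $\supp(\sigma)\otimes \KK_s$, the operator $\Pi_\sigma \otimes I_{\KK_s}$ acts as the identity on the range of $\beta_s$ and annihilates its kernel, hence commutes with $\beta_s$ and with $\beta_s^2$. Each block of the commutator vanishes, so $[M_\sigma,\gamma\otimes T]=0$ and Corollary~\ref{cor:rhosigma_comm} certifies the optimality of $\RPetz$. I expect the only slightly delicate point to be the support statement for $\sqrt{M_\sigma}$, which is what justifies replacing $\gamma$ by $\Pi_\sigma$ inside $T$; once this is granted, everything downstream is routine bookkeeping on block-diagonal PSD matrices.
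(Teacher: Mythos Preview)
Your proposal is correct and follows essentially the same route as the paper: reduce to Corollary~\ref{cor:rhosigma_comm} with $\rho=\sqrt{\sigma}$, identify $\gamma\otimes T$ with (effectively) $I_\LL\otimes \tr_\LL\sqrt{M_\sigma}$ via the support argument, and then use the block-diagonal structure $\sqrt{M_\sigma}=\bigoplus_s\beta_s$ together with $\tr_\LL\beta_s\propto I_{\KK_s}$ to make the commutator vanish blockwise. Your treatment of the $\Pi_\sigma$ versus $I_\LL$ issue is more explicit than the paper's, but the argument is the same.
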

A particularly useful scenario of Corollary~\ref{cor:M_sigma} occurs when $\dim \KK_s = 1$ for all $s$. A channel $\EE$ that satisfies the condition $\sqrt{\sigma} \hat{E}^\dagger_k \hat{E}_{\ell} \sqrt{\sigma} = 0$ for all $k \ne \ell$ falls within this case. This implies that the operators $\hat{E}_k$ corresponding to different values of $k$ map $\supp(\sigma)$ to orthogonal subspaces in $\HH$. We emphasize, however, that logical quantum information may still be distorted within each subspace, thereby preventing perfect recovery of logical quantum information. The optimal recovery operation $\RPetz$ in this situation can be realized by perfectly discriminating the state from the ensemble $\{(I_\LL\otimes \hat{E}_\ell)\ket{\psi_\sigma}\}$ with $I_\LL\otimes \sqrt{\sigma}\hat{E}_k^\dagger$, and then leaving the logical space unchanged.
Since $\RP_k$ depends solely on $\hat{E}_k$, the experimental implementation of $\RPetz$ may be straightforward if applying $\hat{E}_k^\dagger$ is simple.

Another notable case of Corollary~\ref{cor:M_sigma} is going to another extreme, where $\KK = \KK_1$, so trivially, there is $\tr_\LL\sqrt{M_\sigma}\propto I_\KK$. 
Examples of channels $\EE$ falling into this category include the completely depolarizing channel and certain dephasing channels when $\sigma$ is a maximally mixed state. In such cases, the Petz map $\RPetz=\RR^{\mathrm{TC}}=\EE^\dagger$ achieves the same performance as the identity channel. More analytical examples are demonstrated in~\cite{SuppMat}.

\begin{figure}[t]
    \centering
    \includegraphics[width=0.95\linewidth]{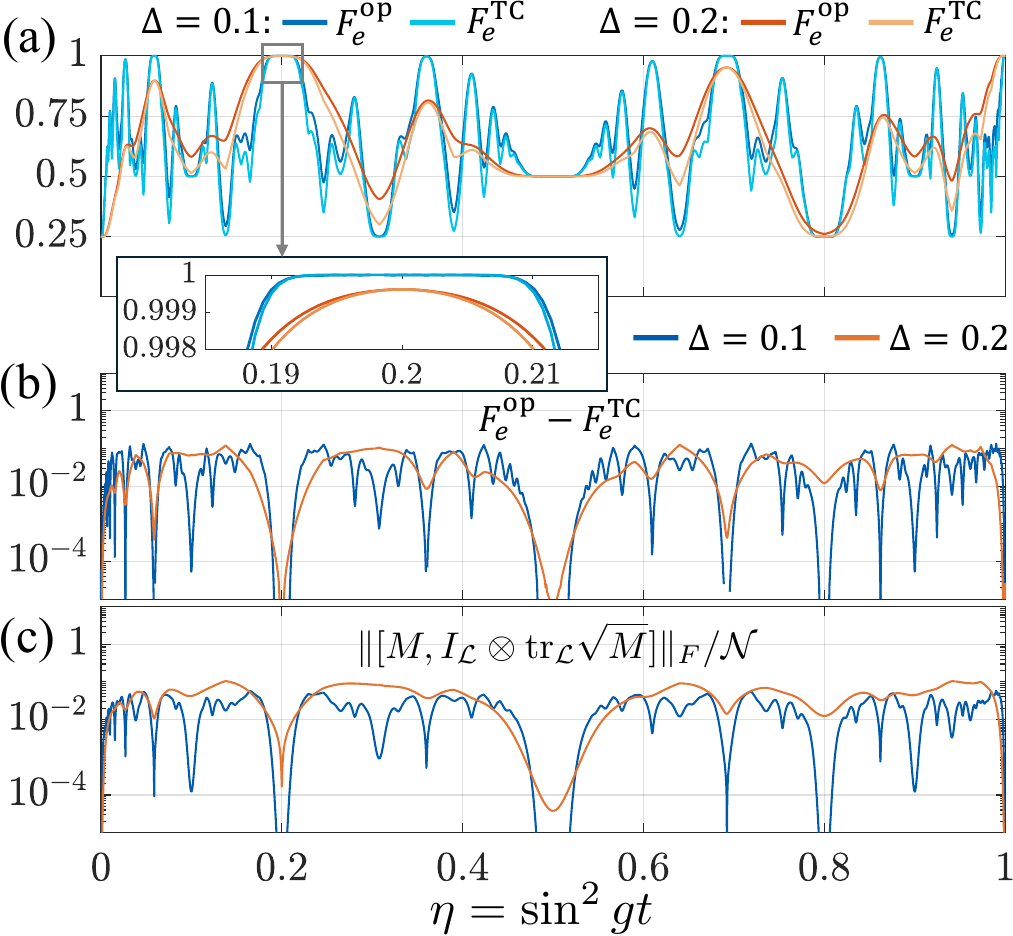}
    \caption{(a) $F^{\mathrm{op}}_e$ and $F^{\mathrm{TC}}_e$ at different values of $\Delta$ and $\eta$. The enlarged inset view displays the negligible gap between $F^{\mathrm{op}}_e$ and $F^{\mathrm{TC}}_e$.  (b) The difference $F^{\mathrm{op}}_e - F^{\mathrm{TC}}_e$ extracted from (a). (c) The normalized Frobenius norm of the commutator in Eq.~\eqref{eq:commutator_iten}, with $\sigma = I_\LL/d$.}
    \label{fig:curves_gkp}
\end{figure}

\textit{Applications.} ---
Theorem~\ref{thm:condition} provides a new, efficient way of quantifying the suboptimality of the Petz map. 
An intuitive way to measure this suboptimality numerically is to evaluate the fidelity gap between the optimal fidelity and that achieved by the Petz map. Typically, the optimal fidelity is obtained by inputting $C_{\EE}$ into iterative algorithms~\cite{Reimpell_Werner_2005PRL} or semidefinite programming~\cite{Fletcher2007PRA}. The first approach typically has a better polynomial complexity $\mathcal{O}(n_{\mathrm{iter}}n^3d^3)$ for $n_{\mathrm{iter}}$ steps. The complexity may be further reduced to $\mathcal{O}(n_{\mathrm{iter}} n_K^3)$ if a small-sized $M$ with $n_K < nd$ is directly available.
In such cases, an effective Choi matrix $C_{\EE}'$, with redundant degrees of freedom removed, can be constructed from $M$ and subsequently fed into the numerical solver.
Numerical experiments suggest that a potential drawback of these iterative approaches is that $n_{\mathrm{iter}}$ may depend on $n$, $n_K$, and $d$, even if the initial guess for the recovery map is $\RPetz$. 
If one's goal is merely to verify the optimality of $\RPetz$, then confirming the positivity of $B$ has complexity $\mathcal{O}(n_K^3)$. In particular, when $[\rho,\sigma] = 0$, computing the norm of the commutator in Eq.~\eqref{eq:commutator_eta} is also more efficient, with complexity $\mathcal{O}(n_K^3)$ independent of the unknown iteration number $n_{\mathrm{iter}}$.

When QEC is considered, the input state of interest is typically $\rho = I_\LL/d$. This is because practical quantum computational tasks usually involve highly entangled logical states that locally appear highly mixed. 
In contrast to previous approximate QEC studies, such as~\cite{Ng_AQECcond_2010PRA}, which focus primarily on qubit stabilizer codes, our goal here is to demonstrate a case where the fidelity gap $F^{\mathrm{op}}_e - F^{\mathrm{TC}}_e$ is negligible compared to the optimal infidelity (for the demonstration purpose, we adopt the TC, where $\sigma = I_\LL/d$).
Specifically, we investigate the quantum transduction model in~\cite{wang2024transduction}, where quantum information is transferred from one engineered bosonic mode to another. 
Let $\ket{\mu_\Delta}$ $(\mu \in \{0,1\})$ be the basis of a finite energy square-lattice Gottesman-Kitaev-Preskill (GKP) code~\cite{GKP2001} for a logical qubit, where $\Delta$ characterizes the average energy by $\approx\hbar\omega/(2\Delta^2)$ with $\hbar\omega$ being the quantized energy per particle.
The intersystem interaction of interest is defined by a beam-splitter-like Hamiltonian $\hat{H} = -ig(\hat{a}^\dagger_1\hat{a}_2 - \hat{a}_1\hat{a}^\dagger_2)$, where $\hat{a}_1$ and $\hat{a}_2$ are the annihilation operators for modes of the system $1$ and $2$, respectively.
Let a logical qubit be encoded in the system $1$ as $\hat{\rho}_{\Delta}$ by $\ket{\mu_\Delta}$, and the state in system $2$ be engineered as $\ket{0_\Delta}$.
Denoting the transmissivity from system $1$ to $2$ as $\eta := \sin^2 gt$, the output of the quantum channel $\EE_{\eta}$ for the logical qubit state $\hat{\rho}$ is given by tracing out the first subsystem of $ e^{i\hat{H}t}\hat{\rho}_{\Delta}\otimes \ket{0_\Delta}\!\bra{0_\Delta} e^{-i\hat{H}t}$.
It is well-known that if system $2$ is initially a Gaussian state, transmitting quantum information via $\EE_{\eta}$ is impossible when $\eta \le 1/2$ unless a non-Gaussian resource is used in system 2~\cite{Lami2020PRL}. Remarkably, in the highly non-Gaussian case where $\Delta \to 0$, we have the perfect recovery if $gt = \arctan \frac{2m_1+1}{2m_2}$ $(m_i\in\mathbb{Z}, m_2\ne 0)$, which implies the possibility of high fidelity quantum transduction at an arbitrarily small $\eta$~\cite{wang2024transduction, SuppMat}.

In the physical case where the GKP encoding has finite energy, we do not observe the perfect quantum transduction at $\eta \to 0$. The infidelity $1-F^{\mathrm{op}}_e$ is usually far from $0$ as displayed in Fig.~\ref{fig:curves_gkp}(a). However, Fig.~\ref{fig:curves_gkp}(b) shows that the gap between $F^{\mathrm{op}}_e$ and $F^{\mathrm{TC}}_e$ remains small and even negligible compared to $1-F^{\mathrm{op}}_e$ for some special values of $\eta$. As a comparison, Fig.~\ref{fig:curves_gkp}(c) shows that the norm of the commutator in Eq.~\eqref{eq:commutator_iten} (normalized by the factor $\mathcal{N}:=\|M\|_F\|I_\LL\otimes\tr_\LL\sqrt{M}\|_F$) approaches zero at these special values of $\eta$, such as $\frac{1}{5}$, $\frac{1}{2}$, $\frac{9}{13}$, and $\frac{4}{5}$. This phenomenon can be understood from the marginal distribution of the final wave function. The highly squeezed peaks from the GKP encoding are shuffled into orthogonal subspaces by $e^{-i\hat{H}t}$, thereby yielding an $M$ that approximates Corollary~\ref{cor:M_sigma} for some special values of $\eta$. More details regarding the simulation are provided in~\cite{SuppMat}.

\textit{Discussion.} ---
This work employs the tool of the QEC matrix to investigate entanglement fidelity under the Petz map, proposing a convenient expression for the entanglement fidelity in terms of the QEC matrix. We establish criteria for the Petz map to serve as the optimal recovery channel that maximizes entanglement fidelity. In this context, it generalizes the previously known KL conditions to all cases with this property. It sheds light on various applications, such as approximate quantum error correction and the performance of recovery maps. Although it is not yet known whether an alternative equivalent physical interpretation of $B\succeq 0$ exists, it serves as a useful tool in quantum information theory for exploring channels with special structures. We anticipate the physical implementation of such optimal Petz maps in future work.

% \begin{acknowledgments}
We thank Ming Yuan for the insightful discussions. We acknowledge support from the ARO (W911NF-23-1-0077), ARO MURI (W911NF-21-1-0325), AFOSR MURI (FA9550-19-1-0399, FA9550-21-1-0209, FA9550-23-1-0338), DARPA (HR0011-24-9-0359, HR0011-24-9-0361), NSF (OMA-1936118, ERC-1941583, OMA-2137642, OSI-2326767, CCF-2312755), NTT Research, Packard Foundation (2020-71479), and the Marshall and Arlene Bennett Family Research Program. This material is based upon work supported by the U.S. Department of Energy, Office of Science, National Quantum Information Science Research Centers and Advanced Scientific Computing Research (ASCR) program under contract number DE-AC02-06CH11357 as part of the InterQnet quantum networking project. 
% \end{acknowledgments}

%apsrev4-2.bst 2019-01-14 (MD) hand-edited version of apsrev4-1.bst
%Control: key (0)
%Control: author (8) initials jnrlst
%Control: editor formatted (1) identically to author
%Control: production of article title (0) allowed
%Control: page (0) single
%Control: year (1) truncated
%Control: production of eprint (0) enabled
%

\clearpage
\widetext
\begin{center}
\textbf{\large Supplemental Material for ``Optimality Condition for the Petz Map''}
\end{center}
\setcounter{equation}{0}
\setcounter{figure}{0}
\setcounter{table}{0}
\setcounter{theorem}{0}
\makeatletter
\setcounter{secnumdepth}{3}
\renewcommand{\theequation}{S\arabic{equation}}
\renewcommand{\thefigure}{S\arabic{figure}}
\newcommand*\circled[1]{\tikz[baseline=(char.base)]{\node[shape=circle,draw,inner sep=1.0pt] (char) {#1};}}

\tableofcontents

\section{Penrose Graphical Notation}
In this Supplemental Material, we find it convenient to represent operators acting on multiple Hilbert spaces using Penrose tensor diagrams~\cite{wood2015tensornetworksgraphicalcalculus}. Most operators of interest involve the Hilbert spaces $\LL:=\mathbb{C}^d$, $\HH:=\mathbb{C}^n$, and $\KK:=\mathbb{C}^{n_K}$, which are indicated by blue, red, and black lines, respectively. For example, an operator $\sigma\in\Pos(\LL)$ and its canonical purification $\ket{\psi_\sigma} \in \LL\otimes \LL$ are represented as follows:
\begin{equation}
    \sigma = \vcenter{\hbox{\includegraphics[width=1.0cm]{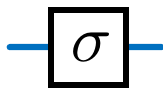}}}\;,\quad
    \ket{\psi_\sigma} = \vcenter{\hbox{\includegraphics[width=1.05cm]{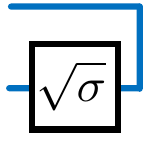}}}
    \;.
\end{equation}
The Choi states associated with the noise channel $\EE:\Pos(\LL)\to \Pos(\HH)$ and the recovery map $\RR:\Pos(\HH)\to \Pos(\LL)$ can be expressed via the operator-sum representation as follows:
\begin{equation}
    C_{\EE} = \vcenter{\hbox{\includegraphics[width=2.5cm]{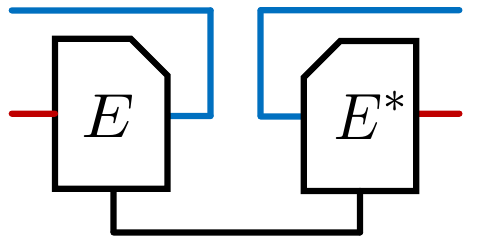}}}\;,\quad
    C_{\RR} = \vcenter{\hbox{\includegraphics[width=2.5cm]{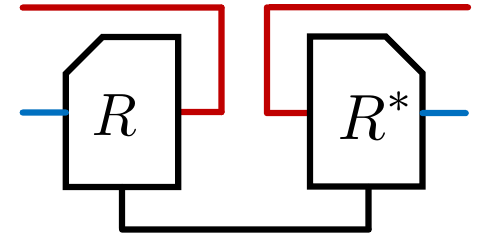}}}\;,\quad
    \hat{E}_k = \raisebox{-0.2cm}{$\vcenter{\hbox{\includegraphics[width=1.1cm]{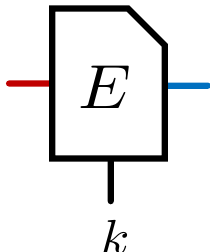}}}$}\;,\quad
    \hat{R}_k = \raisebox{-0.2cm}{$\vcenter{\hbox{\includegraphics[width=1.1cm]{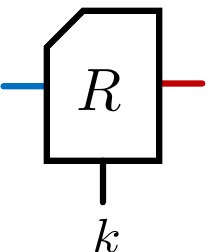}}}$}\;,
\end{equation}
where ``$E$'' and ``$R$'' are Kraus operators for $\EE$ and $\RR$. The symbol ``$*$'' denotes the complex conjugate of tensor components. The label ``$k$'' placed on a specific leg indicates that the corresponding index is fixed to the value $k$. Connections between legs represent tensor contractions over the corresponding indices.
Consequently, the entanglement fidelity can be expressed as:
\begin{align}
    F_e(\rho,\RR\circ\EE)&=\bra{\psi_\rho}\II\otimes \left(\RR\circ\EE\right)(\psi_\rho)\ket{\psi_\rho}\\
    &=\vcenter{\hbox{\includegraphics[width=5.0cm]{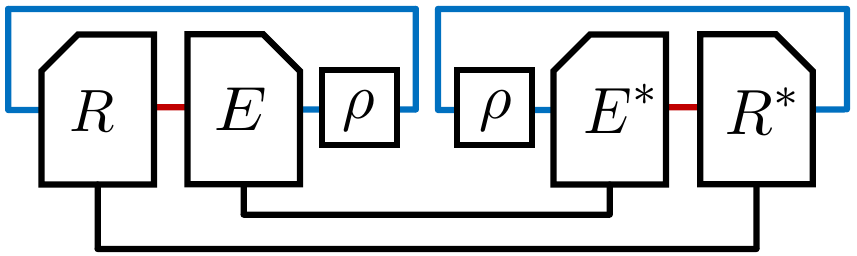}}}\;.\label{eq:ent_fid_penrose}
\end{align}
Similarly, the QEC matrices $M_\sigma$ and $M$ are represented by
\begin{equation}\label{eq:M_penrose}
    M_{\sigma} := \vcenter{\hbox{\includegraphics[width=3.5cm]{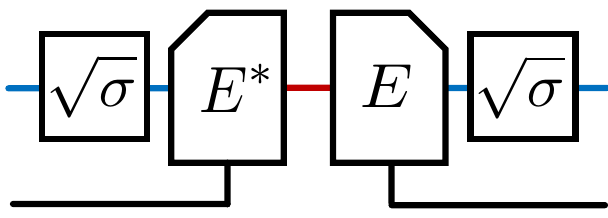}}}\;,\quad
    M = M_{I_\LL} = \vcenter{\hbox{\includegraphics[width=2.25cm]{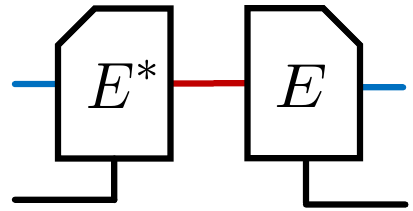}}}\;.
\end{equation}

By comparing the graphical representations of $C_{\EE}$ and $M$, we observe that the former traces out the degrees of freedom in $\KK$, whereas the latter traces out those in $\HH$. Thus, $M^T$ corresponds to the Choi state $C_{\EE^c}$, where $\EE^c$ denotes the complementary channel of $\EE$.

Additionally, the Kraus operator $\hat{E}_k:\LL\to\HH$ considered here acts exclusively on $\LL$. This implies that $\hat{E}_k = \hat{E}_k\Proj_\LL$, where $\Proj_\LL \equiv I_\LL$ is the projector onto the logical space $\LL$. Consequently, the KL conditions $\Proj_\LL \hat{E}_k^\dagger\hat{E}_\ell\Proj_\LL = \alpha_{k\ell}\Proj_\LL$ can be rewritten succinctly as $M = I_\LL\otimes \alpha$.

\section{The Details of Proofs}
\subsection{The Formula of Entanglement Fidelity}
\begin{theorem} \label{thm:ent_fid_petz_S}
Let $\rho, \sigma\in\Pos(\LL)$. The entanglement fidelity associated with the input state $\rho$ and the quantum channel $\Phi=\RPetz\circ\EE$ has a compact form given by
    \begin{equation}\label{eq:ent_fid_petz_S}
        F_e(\rho,\Phi) = \left\|\tr_{\LL}\left(M_\sigma^{-\frac{1}{2}}\left(\sqrt{\sigma}\otimes I_{\KK} \right)M\left(\rho\otimes I_{\KK}\right)\right)\right\|^2_F\;.
    \end{equation}
\end{theorem}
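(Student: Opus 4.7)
The plan is to expand both sides explicitly and reduce them to a common Kraus-sum expression for $F_e$ via a polar decomposition of a Stinespring-like operator built from $\sigma$. The manipulation is most transparent in the Penrose diagrams of the Supplemental Material, but the key algebraic steps are as follows.

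First I would introduce the auxiliary operator $\tilde F : \LL \otimes \KK \to \HH$ defined by $\tilde F := \sum_k (\hat E_k \sqrt\sigma)\otimes \bra k$. A direct calculation shows $\tilde F^\dagger \tilde F = M_\sigma$ and $\tilde F \tilde F^\dagger = \EE(\sigma)$, so the polar decomposition reads $\tilde F = U \sqrt{M_\sigma}$ for a partial isometry $U$ with $U^\dagger U = P_{\supp M_\sigma}$ and $U U^\dagger = P_{\supp \EE(\sigma)}$, giving $\EE(\sigma)^{-1/2} = U M_\sigma^{-1/2} U^\dagger$ (every inverse is Moore--Penrose). Combining $\tilde F^\dagger = \sqrt{M_\sigma}\,U^\dagger$ with the projector identity $M_\sigma^{-1/2}\sqrt{M_\sigma} = P_{\supp M_\sigma}$ then yields the compact relation
\begin{equation*}
  \sqrt\sigma\, \hat E_k^\dagger\, \EE(\sigma)^{-1/2} = (I_\LL \otimes \bra{k})\, U^\dagger,
\end{equation*}
which packages each Petz Kraus operator $(\RP_{\sigma,\EE})_k$ into the single object $U$.

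Next I would expand $X := M_\sigma^{-1/2}(\sqrt\sigma \otimes I_\KK)M(\rho \otimes I_\KK)$ in the product basis $\{\ket{\mu}\otimes \ket{k}\}$ using $M_{[\mu k],[\nu\ell]} = \bra{\mu} \hat E_k^\dagger \hat E_\ell \ket{\nu}$. After carrying out the partial trace on $\LL$, the resulting $k'$-summation satisfies
\begin{equation*}
\sum_{k'}(I_\LL\otimes\bra{k})\,M_\sigma^{-1/2}(I_\LL\otimes\ket{k'})\,\sqrt\sigma\,\hat E_{k'}^\dagger \;=\; (I_\LL\otimes\bra{k})\,M_\sigma^{-1/2}\,\tilde F^\dagger \;=\; (I_\LL\otimes\bra{k})\,U^\dagger,
\end{equation*}
which by the first-step identity equals $\sqrt\sigma\,\hat E_k^\dagger\, \EE(\sigma)^{-1/2}$. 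Hence $(\tr_\LL X)_{k\ell} = \tr\bigl[(\RP_{\sigma,\EE})_k\,\hat E_\ell\,\rho\bigr]$, and summing $|\cdot|^2$ over $k,\ell$ recovers the Kraus-sum form of $F_e$ in \eqref{eq:entfidelity}, completing the proof.

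The main obstacle is bookkeeping with Moore--Penrose projectors, since the theorem imposes no support condition on $\rho$ or $\sigma$: at each contraction one must verify that the idempotent $M_\sigma^{-1/2}\sqrt{M_\sigma} = P_{\supp M_\sigma}$ is absorbed by an adjacent operator. This happens automatically in the above argument because $\tilde F^\dagger$ has range inside $\supp M_\sigma$ by construction, so the projector is invisible in every contraction. The subsequent simplification $F_e = \|T\|_F^2$ with $T$ as in \eqref{eq:T_def} then follows whenever $\supp(\rho)\subseteq\supp(\sigma)$, using the algebraic identity $(\sqrt\sigma\otimes I_\KK)\,M\,(\rho\otimes I_\KK) = M_\sigma\,(\gamma\otimes I_\KK)$ together with $M_\sigma^{-1/2}M_\sigma = M_\sigma^{1/2}$, which reduces $X$ directly to $M_\sigma^{1/2}(\gamma\otimes I_\KK)$.
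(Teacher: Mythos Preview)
Your proposal is correct and follows essentially the same route as the paper. Your polar decomposition $\tilde F = U\sqrt{M_\sigma}$, with $\tilde F^\dagger\tilde F = M_\sigma$ and $\tilde F\tilde F^\dagger = \EE(\sigma)$, is precisely the algebraic content of the paper's tensor-diagram identity $\EE(\sigma)^q = \tilde F\,M_\sigma^{q-1}\tilde F^\dagger$; both arguments then substitute this into the Petz Kraus operator and contract to the claimed Frobenius-norm expression, with identical handling of the Moore--Penrose projectors.
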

\begin{proof}
    The key here is show that $\EE(\sigma)^{q}$ is reasonable for all $q\in\mathbb{R}$:
    \begin{equation}
        \EE(\sigma)^{q} = \vcenter{\hbox{\includegraphics[width=4.5cm]{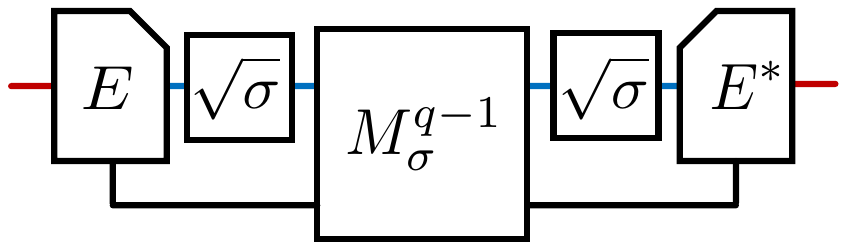}}}\;.
    \end{equation}
    One can verify that, $\EE(\sigma)^{-q} \EE(\sigma)^{q}$ is indeed the projector on $\supp(\EE(\sigma))$.
    By definition, the Kraus operator of the Petz map is given by
    \begin{equation}\label{eq:R_petz_penrose}
        \hat{R}^{\mathrm{P}}_{\sigma,\EE,k} := \sigma^{\frac{1}{2}}\hat{E}^\dagger_k\EE(\sigma)^{-\frac{1}{2}}
        =\vcenter{\hbox{\includegraphics[width=3.25cm]{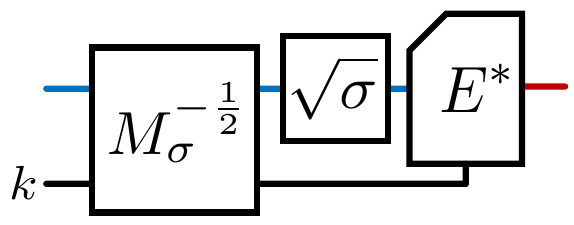}}}\;.
    \end{equation}
    Therefore, the entanglement fidelity according to Eq.~\eqref{eq:ent_fid_penrose} is
    \begin{align}
        F_e(\rho,\RPetz\circ\EE) &= \bigg\|\vcenter{\hbox{\includegraphics[width=4.5cm]{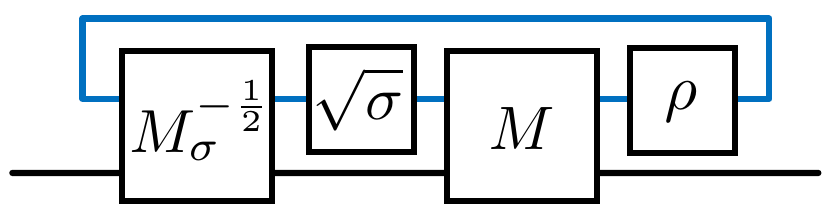}}}\bigg\|_F^2 = \left\|\tr_{\LL}\left(M_\sigma^{-\frac{1}{2}}\left(\sqrt{\sigma}\otimes I_{\KK} \right)M\left(\rho\otimes I_{\KK}\right)\right)\right\|^2_F\;.
    \end{align}
\end{proof}
\subsection{The Variation of the Completely Positive Map}
This section introduces the variation of Kraus operators in the operator-sum representation. We construct a CP map $\RR(\varepsilon)\sim\{\hat{R}_k\}$ with the following Kraus operator:
\begin{equation}\label{eq:Rvar_Kraus}
    \hat{R}_k(\varepsilon) := \big(\RP_{\sigma,\EE}\big)_k + \sum_{s=1}^{\infty}\varepsilon^s \hat{V}^{(s)}_k\;,
\end{equation}
such that $\RR(0) = \RPetz$.
For simplicity, we only consider those $\hat{V}^{(s)}_k:\supp(\EE(\sigma))\to \LL$ so that it guarantees $\hat{V}^{(s)}_\ell = \hat{V}^{(s)}_\ell\sum_k\hat{R}^\dagger_k\hat{R}_k|_{\epsilon = 0}$.
We assume that $\hat{V}^{(1)}_k$ satisfies
\begin{equation}\label{eq:RVVR_S}
\ddf{}{\varepsilon}\sum_k\hat{R}^\dagger_k(\varepsilon)\hat{R}_k(\varepsilon)\big|_{\varepsilon = 0} = 
\sum_k \hat{R}^{\dagger}_k(0)\hat{V}^{(1)}_k + \hat{V}^{(1)\dagger}_k \hat{R}_k(0) = 0\;,
\end{equation}
and $V^{(s)}_k$ $(s\ge 2)$ is defined recursively by:
\begin{align}\label{eq:V_W_def}
    \hat{V}^{(s)}_k &= -\frac{1}{2}\hat{R}_k(0) \hat{W}^{(s)} \;,\quad
    \hat{W}^{(s)} = \sum_k\sum_{r=1}^{s-1} \hat{V}^{(r)\dagger}_k \hat{V}^{(s-r)}_k\;.
\end{align}
In this way, Eq.~\eqref{eq:RVVR_S} and Eq.~\eqref{eq:V_W_def} together yield
\begin{align}
    \sum_k\hat{R}^\dagger_k(\varepsilon)\hat{R}_k(\varepsilon)
    &=
    \sum_k\hat{R}^\dagger_k(0)\hat{R}_k(0) + \sum_{s = 2}^{\infty}\varepsilon^s \left[\hat{W}^{(s)} + \sum_k \left(\hat{R}^\dagger_k(0) \hat{V}^{(s)}_k + \hat{V}^{(s)\dagger}_k \hat{R}_k(0) \right)\right]\\
    &=
    \sum_k\hat{R}^\dagger_k(0)\hat{R}_k(0) + \sum_{s = 2}^{\infty}\varepsilon^s \left(\hat{W}^{(s)}-\frac{1}{2}\hat{W}^{(s)} - \frac{1}{2}\hat{W}^{(s)}\right)
    =
    \sum_k\hat{R}^\dagger_k(0)\hat{R}_k(0)\;.
\end{align}
The construction described above ensures that $\RR(\varepsilon)$ is TP within $\supp(\EE(\sigma))$ for all $|\varepsilon|< v^{-1}$, with some $v>0$. Since $\|\hat{V}^{(s)}\|_1$ at most grows exponentially $\mathcal{O}(v^s)$ with some $v>0$. 
It follows that the operator $\hat{R}_k(\varepsilon)$ is also bounded. Since for all $\ket{\phi}\in\HH$, we have $\bra{\phi} I_\HH \ket{\phi} \ge \max_k \bra{\phi} \hat{R}^\dagger_k(\varepsilon) \hat{R}_k(\varepsilon) \ket{\phi}$, which leads to $1\ge \max_k \|\hat{R}_k(\varepsilon)\|_2$. Here, we have used the definition of the operator $p$-norm: $\| A\|_p :=\sup_{x\ne 0} \|Ax\|_p/\|x\|_p$.

In order to ensure the equal sign in Eq.~\eqref{eq:RVVR_S}, 
let $V^{(1)}_k$ be defined as follows, using a matrix $A$:
\begin{equation}\label{eq:V1_def}
 \hat{V}^{(1)}_k = \vcenter{\hbox{\includegraphics[width=2.5cm]{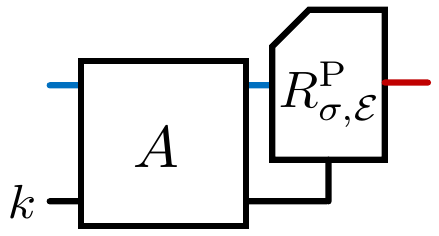}}}\;.
\end{equation}
By substituting Eq.~\eqref{eq:V1_def} into Eq.~\eqref{eq:RVVR_S} and then sandwiching the result with $\RP_{\sigma,\EE,k}(\cdots)\hat{R}^{\mathrm{P}\dagger}_{\sigma,\EE,\ell}$, we obtain 
\begin{equation}\label{eq:PAAP}
    \Proj_{M_\sigma}(A+A^\dagger) \Proj_{M_\sigma} = 0\;.
\end{equation}
In Eq.~\eqref{eq:PAAP}, the projector on $\supp(M_\sigma)$ is denoted as $\Proj_{M_\sigma}$, which comes from
\begin{equation}
    \vcenter{\hbox{\includegraphics[width=8.5cm]{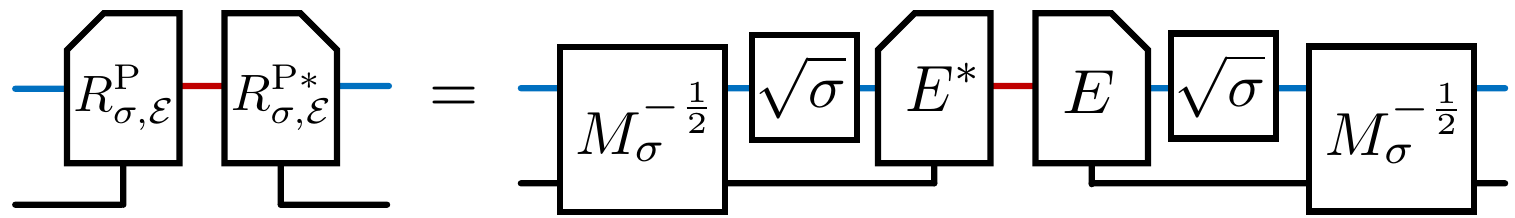}}}
    =M_\sigma^{-\frac{1}{2}}M_\sigma M_\sigma^{-\frac{1}{2}}
    =
    \Proj_{M_\sigma}\;,
\end{equation}
where Eq.~\eqref{eq:R_petz_penrose} and Eq.~\eqref{eq:M_penrose} are used for the derivation.
In other words, to ensure that the recovery map $\RR(\varepsilon)$ is TP on $\supp(\EE(\sigma))$, the matrix $A$ can be arbitrary, provided it is anti-Hermitian on $\supp(M_\sigma)$.

Let $\supp(\rho)\subseteq \supp(\sigma)$, we define
\begin{align}
    \gamma&:=\sigma^{-\frac{1}{2}}\rho\;,\\
    T&:=\tr_\LL \left(\left(\gamma^\dagger \otimes I_\KK\right) \sqrt{M_\sigma}\right) = \vcenter{\hbox{\includegraphics[width=4.25cm]{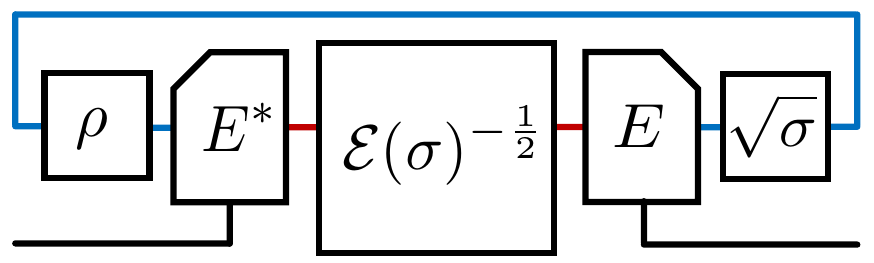}}}\;,\label{eq:T_def_S} \\
    B&:=\sqrt{M_\sigma}\left(\gamma\otimes T\right)\;.\label{eq:B_def_S}
\end{align}
The second equal sign in Eq.~\eqref{eq:T_def_S} is obtained due to 
\begin{equation}
    M_\sigma^{q} = \vcenter{\hbox{\includegraphics[width=4.25cm]{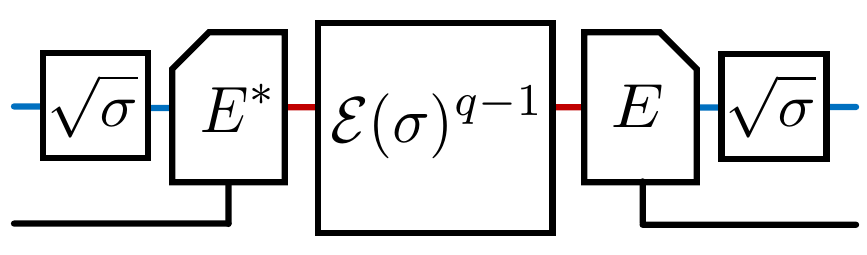}}}\;.
\end{equation}
The above ingredients allow us to simplify the derivative of $F_e$ at $\varepsilon = 0$ as:
\begin{align}
    \ddf{}{\varepsilon} F_e(\rho,\RR(\varepsilon)\circ\EE)\big|_{\varepsilon = 0} 
    &= 2\mathrm{Re} \sum_{k,\ell=1}^{n_K} 
    \tr\left(\hat{V}^{(1)}_k\hat{E}_\ell\rho\right)
    \tr\left(\rho\hat{E}^\dagger_\ell \hat{R}^{\dagger}_k(0)\right) \label{eq:dFde_sub1}\\
    &= 2\mathrm{Re}\vcenter{\hbox{\includegraphics[width=6cm]{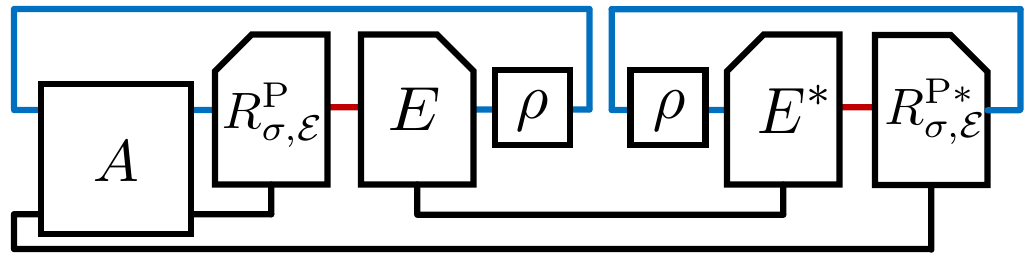}}} \\
    &=2\mathrm{Re}\;\tr \left(\vcenter{\hbox{\includegraphics[width=4.5cm]{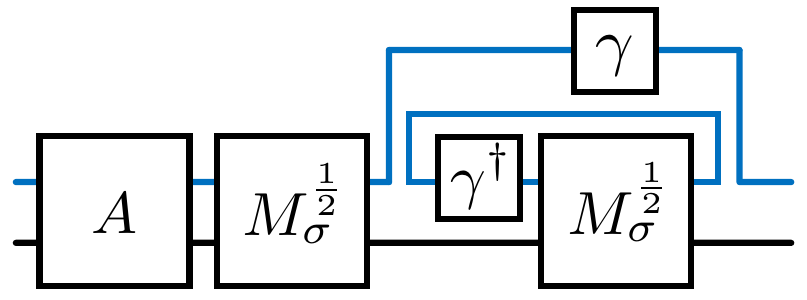}}}\right) \label{eq:dFde_sub2}\\
    &= 2\mathrm{Re}\;\tr \left(AB\right)
    = \tr\left(A\left(B-B^\dagger\right)\right)
    \;, \label{eq:dFde}
\end{align}
where we have used Eq.~\eqref{eq:M_penrose}, Eq.~\eqref{eq:R_petz_penrose}, and Eq.~\eqref{eq:V1_def} to obtain Eq.~\eqref{eq:dFde_sub2}. Particularly, we have used the anti-Hermiticity in Eq.~\eqref{eq:PAAP} for the last equal sign. 

Since the second derivative involves $\hat{V}^{(2)}_k$, we can combine Eq.~\eqref{eq:V_W_def} and Eq.~\eqref{eq:V1_def} to show that
\begin{equation}\label{eq:V2_def}
    \hat{V}_k^{(2)} = -\frac{1}{2}\hat{R}_k(0)\sum_{\ell = 1}^{n_K} \hat{V}_{\ell}^{(1)\dagger} \hat{V}_\ell^{(1)} = -\frac{1}{2}\vcenter{\hbox{\includegraphics[width=3.5cm]{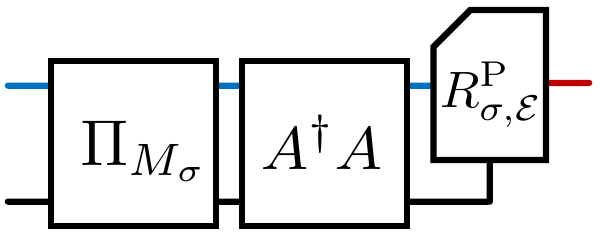}}}\;.
\end{equation}
Thus, the second derivative is:
\begin{align}
    \ddf{^2}{\varepsilon^2} F_e(\rho,\RR(\varepsilon)\circ\EE)\big|_{\varepsilon = 0} 
    &= 2\sum_{k,\ell=1}^{n_K}\left|\tr\left(\hat{V}_k^{(1)}\hat{E}_\ell \rho\right)\right|^2 + 4\mathrm{Re}\sum_{k,\ell=1}^{n_K} 
    \tr\left(\hat{V}_k^{(2)}\hat{E}_\ell\rho\right)
    \tr\left(\rho\hat{E}^\dagger_\ell \hat{R}^{\dagger}_k(0)\right) \label{eq:dFde2_sub0}\\
    &=
    2\sum_{k,\ell=1}^{n_K}\left|\tr(\hat{V}_k^{(1)}\hat{E}_\ell \rho)\right|^2 - 2\mathrm{Re}\;\tr\left(\Proj_{M_\sigma}A^\dagger AB\right) \label{eq:dFde2_sub1} \\
    &=
    2\sum_{k,\ell=1}^{n_K}\left|\tr(\hat{V}_k^{(1)}\hat{E}_\ell \rho)\right|^2 - \tr\left(A^\dagger A\left(B\Proj_{M_\sigma}+\Proj_{M_\sigma}B^\dagger \right)\right)\;, \label{eq:dFde2}
\end{align}
where the second term of Eq.~\eqref{eq:dFde2_sub1} is derived by substituting Eq.~\eqref{eq:V2_def} into Eq.~\eqref{eq:dFde2_sub0}. Eq.~\eqref{eq:dFde2} is then obtained from Eq.~\eqref{eq:dFde2_sub1} by permuting matrix products within $\tr(\cdots)$.

\subsection{The Optimality Condition}
\begin{theorem}\label{thm:condition_S}
    Let $\rho, \sigma\in\Pos(\LL)$ with $\supp(\rho) \subseteq \supp(\sigma)$, and let $M_\sigma$ be the QEC matrix associated with the reference state $\sigma$ and the noise channel $\EE$. 
    Then, $\RR = \RPetz$ optimizes  $F_e(\rho, \RR\circ\EE)$ if and only if 
    \begin{equation}
        B:=\sqrt{M_\sigma}\left(\gamma\otimes T\right)\succeq 0\;.    
    \end{equation}
\end{theorem}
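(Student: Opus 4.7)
\emph{Proof proposal.} The plan is to prove the two directions separately, exploiting the variational calculus and Lagrangian machinery already assembled in the preceding subsections. Necessity is a first- and second-order optimality argument along the perturbative family of Eq.~\eqref{eq:Rvar_Kraus}; sufficiency is a Karush--Kuhn--Tucker analysis of the SDP over Choi matrices sketched in the main text.

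For the \textbf{necessity} direction, I would take $\RPetz$ as a global maximizer of $F_e$ and probe it along the family $\RR(\varepsilon)$ of Eqs.~\eqref{eq:Rvar_Kraus}--\eqref{eq:V_W_def}, where $\hat{V}^{(1)}_k$ is encoded by the matrix $A$ via Eq.~\eqref{eq:V1_def} subject only to the trace-preservation constraint Eq.~\eqref{eq:PAAP}, i.e., $A$ is anti-Hermitian on $\supp(M_\sigma)$. Setting $\mathrm{d}F_e/\mathrm{d}\varepsilon|_{\varepsilon=0}=0$ and applying Eq.~\eqref{eq:dFde} yields $\tr(A(B-B^\dagger))=0$ for every admissible $A$. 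Decomposing $A$ into the four blocks defined by $\supp(M_\sigma)\oplus\supp(M_\sigma)^\perp$ and using $B=\Proj_{M_\sigma}B$ (which follows from the $\sqrt{M_\sigma}$ prefactor in Eq.~\eqref{eq:B_def_S}), one finds that the off-support blocks of $B$ must vanish and the on-support block must be Hermitian, so $B=B^\dagger$. Then the second-order condition $\mathrm{d}^2F_e/\mathrm{d}\varepsilon^2|_{\varepsilon=0}\le 0$ combined with Eq.~\eqref{eq:dFde2} gives $\tr(A^\dagger A(B\Proj_{M_\sigma}+\Proj_{M_\sigma}B^\dagger))\ge 2\sum_{k,\ell}|\tr(\hat{V}^{(1)}_k\hat{E}_\ell\rho)|^2\ge 0$. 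Testing against $A=i\ket{\phi}\bra{\phi}$ with $\ket{\phi}\in\supp(M_\sigma)$ (which is anti-Hermitian hence admissible) reduces the inequality to $\bra{\phi}B\ket{\phi}\ge 0$, giving $B\succeq 0$.

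For the \textbf{sufficiency} direction, I would treat the optimization as the SDP over Choi matrices $C_\RR$ described in the main text, write down the Lagrangian~\eqref{eq:Lagrange}, and verify each KKT condition at $C_*=C_{\RPetz}$. The multiplier solutions~\eqref{eq:Lambda_soln}--\eqref{eq:Gamma_soln} come directly from stationarity, primal feasibility of $C_{\RPetz}$ is standard, and complementary slackness $\tr(\Gamma^T C_*)=0$ follows from the linearity of $\mathscr{L}$ together with the fact that $\Gamma$ is constructed so that $\Gamma C_*=0$. The substantive remaining task is dual feasibility $\Lambda\succeq 0$ and $\Gamma\succeq 0$. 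I would compute $\Lambda$ in Penrose tensor-diagram form using the Petz Kraus operators Eq.~\eqref{eq:R_petz_penrose}, massaging the result until it appears as a partial trace of a congruence transformation of $B$; then the hypothesis $B\succeq 0$ immediately delivers $\Lambda\succeq 0$. A parallel diagrammatic calculation should rewrite $\Gamma=I_\LL\otimes\Lambda-\nabla_{C_\RR^T}F_e$ as $V^\dagger B V$ for an explicit linear map $V$ built from $\sqrt{M_\sigma}$, so $\Gamma\succeq 0$ is inherited as well.

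The \textbf{main obstacle} is the PSD-ness of $\Gamma$ in the sufficiency direction. Because $\Gamma$ is defined as a difference $I_\LL\otimes\Lambda-\nabla_{C_\RR^T}F_e$ of two operators that individually need not be PSD, proving $\Gamma\succeq 0$ requires exhibiting exact cancellations that assemble the combination into a controlled object, and the only object known \emph{a priori} to be PSD in the hypothesis is $B$. I expect the cleanest route is a diagrammatic manipulation in the Penrose notation of Sec.~I: match every $\sqrt{M_\sigma}$ arising from the Petz Kraus operators against the $\sqrt{M_\sigma}$ factor hidden in $B$, recognize $\Gamma$ as a congruence of $B$ by an explicit (partial) isometry supported on $\KK$, and thereby transfer positivity from $B$ to $\Gamma$. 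If this single cancellation can be made transparent, all four KKT conditions close and Theorem~\ref{thm:condition_S} follows.
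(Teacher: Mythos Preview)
Your necessity argument and the overall SDP/KKT skeleton for sufficiency match the paper essentially line for line, including the identification of $\Lambda\succeq 0$ via the congruence $\widetilde{B}:=M_\sigma^{-1/2}B^\dagger M_\sigma^{-1/2}\succeq 0$.

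The one place where your plan departs from the paper---and where I think your proposed mechanism would not go through as written---is the step $\Gamma\succeq 0$. You propose to exhibit $\Gamma$ as a single congruence $V^\dagger B V$ with $V$ built from $\sqrt{M_\sigma}$. The paper does \emph{not} do this, and it is not clear such a $V$ exists: $\Gamma=I_\LL\otimes\Lambda-\nabla_{C_\RR^T}F_e$ is a genuine difference of two PSD operators, and the subtrahend $\nabla_{C_\RR^T}F_e=(\rho^T\otimes I_\HH)C_\EE(\rho^T\otimes I_\HH)$ contains no $\sqrt{M_\sigma}$ at all to ``match'' against $B$. The paper's actual device is a completion of squares. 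It inserts the projector $TT^{-1}$ (justified by the support inclusion $\supp(\hat{S}_{\rho^2}^\dagger\hat{S}_{\rho^2})\subseteq\supp(TT^\dagger)$), then writes the two terms as $\circled{1}\,\circled{1}^\dagger$ and $\circled{2}\,\circled{2}^\dagger$, where $\circled{1}$ carries a factor $\widetilde{B}^{1/2}$ and $\circled{2}$ carries $\widetilde{B}^{-1/2}$. The hypothesis $B\succeq 0$ enters precisely here, to make $\widetilde{B}^{\pm 1/2}$ well defined. The nontrivial algebraic identity is the cross-term equality $\circled{1}\,\circled{2}^\dagger=\circled{2}\,\circled{1}^\dagger=\circled{2}\,\circled{2}^\dagger$, which in turn relies on $B\Proj_{M_\sigma}=B$ and the explicit form $\widetilde{B}^{-1}=\sqrt{M_\sigma}(\gamma^{-1}\otimes T^{-1})$. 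Once that identity is in hand,
\[
\Gamma=\circled{1}\,\circled{1}^\dagger-\circled{2}\,\circled{2}^\dagger
=\bigl(\circled{1}-\circled{2}\bigr)\bigl(\circled{1}-\circled{2}\bigr)^\dagger\succeq 0.
\]
So the positivity of $\Gamma$ is inherited not by congruence with $B$ but by this square-completion trick, and the key cancellation you correctly flag as the ``main obstacle'' is the cross-term identity rather than a matching of $\sqrt{M_\sigma}$ factors.
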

\begin{proof}
    To prove the \textit{necessity}, we apply the variation of $\RR$, as defined by Eq.~\eqref{eq:Rvar_Kraus}. Since $\RR(\varepsilon)$ optimizes $F_e$ at $\varepsilon = 0$ with any $A$ fulfilling Eq.~\eqref{eq:PAAP}, we have the following conditions 
    \begin{align}
        \ddf{}{\varepsilon} F_e(\rho,\RR(\varepsilon)\circ \EE)|_{\varepsilon = 0} &= 0\;, \label{eq:dFde_cond}\\
        \ddf{^2}{\varepsilon^2} F_e(\rho,\RR(\varepsilon)\circ \EE)|_{\varepsilon = 0} &\le 0\;, \label{eq:dFde2_cond}
    \end{align} 
    Eq.~\eqref{eq:dFde_cond} requires that $B = B^\dagger$, implying that $\supp(B)\subseteq \supp(M_\sigma)$. Additionally, Eq.~\eqref{eq:dFde2_cond} demands that the second term of Eq.~\eqref{eq:dFde2} be non-negative, leading to the condition $B\Proj_{M_\sigma}+\Proj_{M_\sigma}B^\dagger = B + B^\dagger \succeq 0$. Therefore, $B$ must be positive semidefinite (PSD).

    To prove the \textit{sufficiency} of $B\succeq 0$, we consider the KKT conditions in semidefinite programming of the following convex optimization problem~\cite{watrous2009SDP}:
    \begin{equation}
        \begin{array}{lll}
            &\text{minimize}\quad -F_e[C_\RR]  \\
            &\text{subject to}\quad G:=I_{\HH} - \tr_{\LL} C_{\RR} \succeq 0
            \text{ and }C_\RR\succeq 0\;,\\
        \end{array}
    \end{equation}
    where $-F_e[C_\RR] = -\tr\left(C_\RR\nabla_{C_\RR}F_e(\rho,\RR\circ\EE)\right)$ is a differentiable convex function with respect to the Choi matrix $C_\RR$. The domain of $C_\RR$ is also convex, as any convex combination of Choi matrices defines a valid completely positive map.
    The definiteness of $G$ restricts $C_\RR$ to be trace nonincreasing.
    The optimization problem is equivalent to the Lagrange function:
    \begin{align}
        \mathscr{L}(C_\RR,\Lambda,\Gamma)&:= \mathscr{F}(C_\RR)  - \tr(\Lambda^TG) - \tr(\Gamma^T C_\RR)\\
        &=
        \tr\left(\left(-\nabla_{C_\RR}F_e + I_\LL\otimes \Lambda^T - \Gamma^T\right)C_{\RR}\right) - \tr(\Lambda)\;,
    \end{align}
    with Lagrange multipliers $\Lambda$ and $\Gamma$ that penalize the constraints $G \succeq 0$ and $C_\RR \succeq 0$, respectively.
    According to the Karush–Kuhn–Tucker (KKT) conditions, if $C_{\RR} = C_*$ is the optimal solution, then there exist solutions of $\Lambda$ and $\Gamma$ such that the following conditions hold at $C_\RR = C_*$.:
    \begin{itemize}
        \item {Primal feasibility:} $C_\RR\succeq 0$ and $G \succeq 0$.
        \item {Dual feasibility:} $\Gamma \succeq 0$, $\Lambda \succeq 0$.
        \item {Complementary slackness:} $-\tr(\Lambda^TG) - \tr(\Gamma^T C_\RR) = 0$. 
        \item {Stationarity:} $\nabla_{C_{\RR}} \mathscr{L}=0$.
    \end{itemize}
    Assuming $\RR_* = \RPetz$, the primal feasibility condition is satisfied. It remains to ensure that the other conditions hold under $B \succeq 0$. The strategy involves constructing $\Lambda$ and $\Gamma$ based on some of these conditions.
    We begin by imposing stationarity, which requires $0=-\nabla_{C_\RR}F_e + I_\LL\otimes \Lambda^T - \Gamma^T$. This leads to
    \begin{equation}\label{eq:Gamma_def}
        \Gamma = I_{\LL}\otimes \Lambda - \nabla_{C_\RR^T}F_e\;.
    \end{equation}
    Additionally, given that $\Gamma$, $\Lambda$, $G$, and $C_*$ are all PSD, the condition of complementary slackness is equivalent to
    \begin{align}\label{eq:GammaC_0}
        \Gamma^T C_* &= 0\;,
    \end{align}
    and
    \begin{align}\label{eq:LambdaG_0}
        \Lambda^TG &= \Lambda^T\left(I_{\HH} -  \tr_{\LL} C_*\right) = 0\;.
    \end{align}
    According to Eq.~\eqref{eq:GammaC_0}, left multiplying Eq.~\eqref{eq:Gamma_def} with $C_*^T$ yields
    \begin{align}\label{eq:stationarity}
         0 = \tr_{\LL}\left(C_*^T\Gamma\right) = -\tr_{\LL}\left(C_*^T\nabla_{C_{\RR}^T}F_e\right) + \left(\tr_{\LL}C_*^T\right)\Lambda = -\tr_{\LL}\left(C_*^T\nabla_{C_{\RR}^T}F_e\right) + \Lambda\;.
    \end{align}
    By substituting $ \nabla_{C^T_{\RR}}F_e = \big(\rho^T\otimes I_\HH\big)C_{\EE}\big(\rho^T\otimes I_\HH\big)$ and $C_* = C_{\RR^{P}_{\sigma,\EE}}$, Eq.~\eqref{eq:stationarity} provides the solution of $\Lambda$:
    \begin{equation}\label{eq:Lambda_def}
        \begin{aligned}
            \Lambda &= \tr_{\LL} \left(C^T_{\RR_*}\nabla_{C^T_{\RR}}F_e\right) = 
             \raisebox{-0.15cm}{$\vcenter{\hbox{\includegraphics[width=10.5cm]{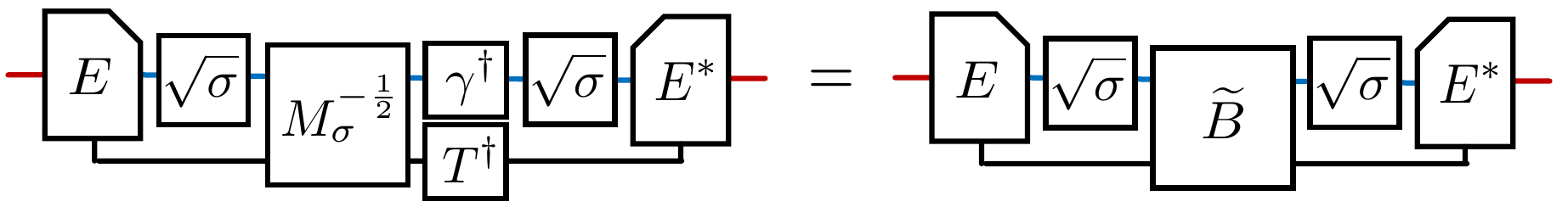}}}$}\;,
         \end{aligned}
    \end{equation}
    where we define
    \begin{equation}
        \widetilde{B} := M_\sigma^{-\frac{1}{2}} B^\dagger M_\sigma^{-\frac{1}{2}}\;.
    \end{equation}
    Given that $\RR_*=\RPetz$ is TP on $\supp(\EE(\sigma))$,
    the solution for $\Lambda$ implies that $\supp(\Lambda) \subseteq \supp(\EE(\sigma))\subseteq\ker\left(I_{\HH} -  \tr_{\LL} C_*\right)$. This ensures that Eq.~\eqref{eq:LambdaG_0} is satisfied. Therefore, $\Lambda$ and $\Gamma$ fulfill the condition of complementary slackness.

    For dual feasibility, since $\widetilde{B}\succeq 0$, Eq.~\eqref{eq:Lambda_def} must imply that $\Lambda\succeq 0$.
    Before proving the definiteness of $\Gamma$, we need to clarify the relationships between the matrix supports. 
    Let us define $\hat{S}_\sigma := \sum_k (I_\LL\otimes\hat{E}_k)\ket{\psi_\sigma}\bra{k}$.
    Then, from Eq.~\eqref{eq:T_def_S}, we have
    \begin{equation}
        TT^\dagger = \hat{S}^\dagger_{\rho^2} \left(I_\LL\otimes \EE(\sigma)^{-\frac{1}{2}}\right)\II\otimes\EE(\psi_\sigma) \left(I_\LL\otimes \EE(\sigma)^{-\frac{1}{2}}\right) \hat{S}_{\rho^2}\;.
    \end{equation}
    Assuming $\supp(\rho)\subseteq \supp(\sigma)$, we obtain
    \begin{equation}\label{eq:subseteq_seq}
        \supp(\hat{S}_{\rho}\hat{S}_{\rho}^\dagger) = \supp(\II\otimes \EE(\psi_{\rho})) \subseteq \supp(\II\otimes \EE(\psi_\sigma)) \subseteq \supp(I_\LL\otimes \EE(\sigma)) = 
        \supp(I_\LL\otimes \EE(\sigma)^{-\frac{1}{2}})\;.
    \end{equation}
    The second ``$\subseteq$'' in Eq.~\eqref{eq:subseteq_seq} is justified by the fact that the mixed state $I_\LL\otimes \EE(\sigma)$ can be viewed as a pure state $\II\otimes \EE(\psi_\sigma)$ that is maximally depolarized on the reference system  $\LL$. Eq.~\eqref{eq:subseteq_seq} implies
    \begin{equation}\label{eq:TTislarger}
         \supp(\hat{S}^\dagger_{\rho^2} \hat{S}_{\rho^2}) \subseteq \supp(TT^\dagger) \Longrightarrow \hat{S}_{\rho^2} TT^{-1} = \hat{S}_{\rho^2}\;.
    \end{equation}
    It follows that $\Gamma$ can be decomposed as:
    \begin{align}
        \Gamma &= \vcenter{\hbox{\includegraphics[width=8.0cm]{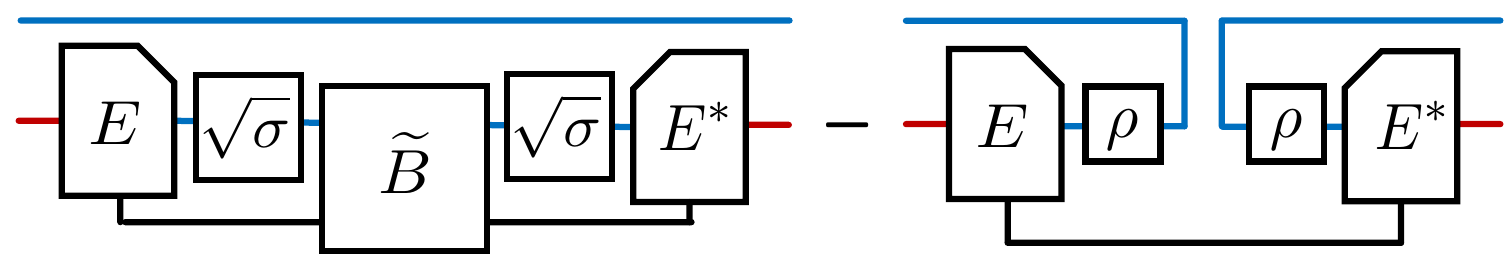}}}  \label{eq:Gamma_sub1}\\
        &= \raisebox{-0.07cm}{$\vcenter{\hbox{\includegraphics[width=11.0cm]{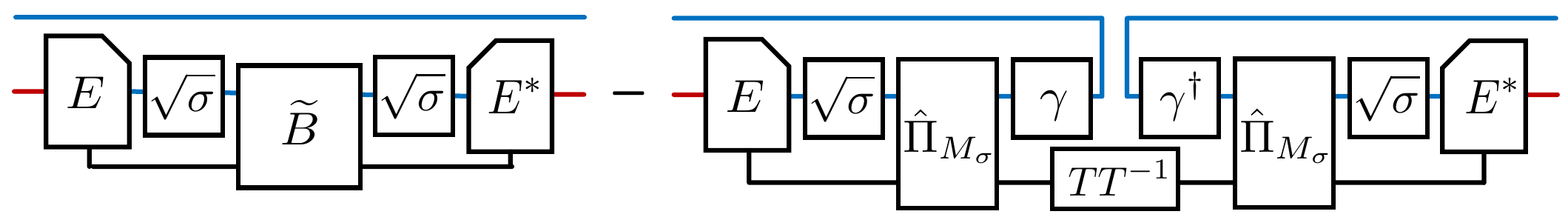}}} \label{eq:Gamma_sub2}$}\\
        &= \raisebox{-0.2cm}{$\vcenter{\hbox{\includegraphics[width=10.5cm]{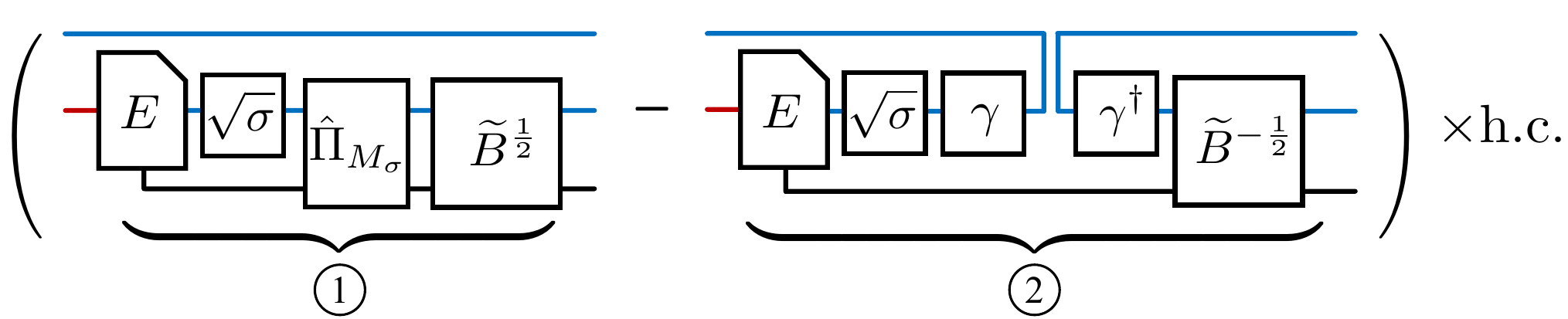}}} \label{eq:Gamma_sub3}$}\;,
    \end{align}
    where Eq.~\eqref{eq:Gamma_sub1} is derived from Eq.~\eqref{eq:ent_fid_penrose}, Eq.~\eqref{eq:Gamma_def}, and Eq.~\eqref{eq:Lambda_def}. 
    In Eq.~\eqref{eq:Gamma_sub2}, the projector $TT^{-1}\equiv T(T^\dagger T)^{-1}T^\dagger$ is inserted into the expression, justified by Eq.~\eqref{eq:TTislarger}. 
    Additionally, the insertion of the projector $\Proj_{M_\sigma}$ in the above derivation is self-evident due to the definition of $M_\sigma$.
    Finally, we explain how Eq.~\eqref{eq:Gamma_sub2} can be factorized into Eq.~\eqref{eq:Gamma_sub3} as follows. 
    It is straightforward to see that $\circled{1}\;\circled{1}^\dagger$ equals to the first term of Eq.~\eqref{eq:Gamma_sub2}. 
    Since $\widetilde{B}$ is PSD, we find
     \begin{align}
        \tr_\LL \left(\left(\gamma^\dagger \otimes I_\KK\right) \widetilde{B}^{-\frac{1}{2}}\widetilde{B}^{-\frac{1}{2}\dagger} \left(\gamma \otimes I_\KK\right)\right) &= \tr_\LL \left(\left(\gamma^\dagger \otimes I_\KK\right) \widetilde{B}^{\dagger-1} \left(\gamma \otimes I_\KK\right)\right)\\
        &=\tr_\LL \left(\left(\gamma^\dagger \otimes I_\KK\right) \sqrt{M_\sigma}\left(\gamma^{-1}\otimes T^{-1}\right)  \left(\gamma \otimes I_\KK\right)\right) \\
        &=\tr_\LL\left(\left(\gamma^\dagger\otimes I_\KK\right)\sqrt{M_\sigma}\right)T^{-1} = TT^{-1}\;, 
    \end{align}
    so there is
    \begin{align}
        \circled{2}\;\circled{2}^\dagger &= \text{the second term of Eq.~\eqref{eq:Gamma_sub2}}\;.
    \end{align}
    Utilizing $B\Proj_{M_\sigma} = \Proj_{M_\sigma}B = B$, we have
    \begin{align}
        \Proj_{M_\sigma} \widetilde{B}^{\frac{1}{2}}\widetilde{B}^{-\frac{1}{2}\dagger}  
        &= \Proj_{M_\sigma} \widetilde{B}^\dagger \widetilde{B}^{\dagger-1} = M_\sigma^{-\frac{1}{2}}M_\sigma^{\frac{1}{2}}(\gamma\otimes T)\Proj_{M_\sigma} (\gamma^{-1}\otimes T^{-1}) \\
        &= M_\sigma^{-\frac{1}{2}} B\Proj_{M_\sigma} (\gamma^{-1}\otimes T^{-1}) = M_\sigma^{-\frac{1}{2}} B (\gamma^{-1}\otimes T^{-1}) \\
        &= \Proj_{M_\sigma}(\gamma\otimes T) (\gamma^{-1}\otimes T^{-1})\\
        & = \Proj_{M_\sigma} \left(\gamma \gamma^{-1}\otimes TT^{-1}\right)\;,
    \end{align}
    so there is
    \begin{align}
        \circled{1}\;\circled{2}^\dagger = \circled{2}\;\circled{1}^\dagger = \circled{2}\;\circled{2}^\dagger\;. 
    \end{align}
    It follows that 
    \begin{equation}
        \begin{aligned}
            \Gamma &= \circled{1}\;\circled{1}^\dagger - \circled{2}\;\circled{2}^\dagger \\
            &= \circled{1}\;\circled{1}^\dagger - \circled{1}\;\circled{2}^\dagger - \circled{2}\;\circled{1}^\dagger + \circled{2}\;\circled{2}^\dagger \\
            &= \left(\circled{1} - \circled{2}\right)\left(\circled{1} - \circled{2}\right)^\dagger \succeq 0\;.
        \end{aligned}
    \end{equation}
    Hence, the explicit Lagrange multipliers $\Lambda$ and $\Gamma$ satisfy all the KKT conditions, thereby proving the sufficiency of the theorem.
\end{proof}
\subsection{Corollaries and Their Proofs}
\begin{corollary}\label{cor:rhosigma_comm_S}
    Let $\rho, \sigma\in\Pos(\LL)$ satisfying $[\rho,\sigma]=0$ with $\supp(\rho) \subseteq \supp(\sigma)$.
    Then, $\RR = \RPetz$ optimizes  $F_e(\rho, \RR\circ\EE)$ if and only if
    \begin{equation}\label{eq:commutator_eta_S}
        \left[M_\sigma, \gamma\otimes T\right] = 0.
    \end{equation}
\end{corollary}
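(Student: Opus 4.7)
The plan is to invoke Theorem~\ref{thm:condition_S} and show that, under the commuting assumption $[\rho,\sigma]=0$ with $\supp(\rho)\subseteq\supp(\sigma)$, the positivity condition $B\succeq 0$ collapses to the commutator identity $[M_\sigma,\gamma\otimes T]=0$. The first step is to simultaneously diagonalize $\rho$ and $\sigma$ and note, using the pseudoinverse convention on $\ker(\sigma)$ together with $\supp(\rho)\subseteq\supp(\sigma)$, that $\gamma=\sigma^{-1/2}\rho$ is Hermitian and PSD (its eigenvalues are $r_i/\sqrt{s_i}\ge 0$ on $\supp(\sigma)$, and $0$ otherwise).

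The second step is to observe that $T$ is itself PSD. Using the ``cyclicity'' of the partial trace with respect to operators that act trivially on $\KK$, I rewrite $T = \tr_\LL\bigl((\sqrt{\gamma}\otimes I_\KK)\sqrt{M_\sigma}(\sqrt{\gamma}\otimes I_\KK)\bigr)$, which exhibits $T$ as the partial trace of a PSD operator; hence $T\succeq 0$, and therefore $\gamma\otimes T\succeq 0$.

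With these two observations in hand, the equivalence follows from the elementary fact that a product of two commuting PSD operators is PSD. Explicitly, if $B\succeq 0$ then $B=B^\dagger$, which (using the hermiticity of $\gamma\otimes T$) reads $\sqrt{M_\sigma}(\gamma\otimes T)=(\gamma\otimes T)\sqrt{M_\sigma}$, and squaring yields $[M_\sigma,\gamma\otimes T]=0$. Conversely, if $[M_\sigma,\gamma\otimes T]=0$, then by functional calculus $[\sqrt{M_\sigma},\gamma\otimes T]=0$, so $B=\sqrt{M_\sigma}(\gamma\otimes T)$ is the product of two commuting PSD operators and hence PSD. Combining with Theorem~\ref{thm:condition_S} completes the proof.

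I do not anticipate any serious obstacle: the heavy lifting is done by Theorem~\ref{thm:condition_S}, and the remaining steps are standard. The only care needed is in treating the pseudoinverses on $\ker(\sigma)$ (so that $\gamma$ is well-defined and PSD) and in justifying the trivial-subsystem cyclicity used to rewrite $T$ as a partial trace of a PSD operator; both are routine.
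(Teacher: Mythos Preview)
Your proposal is correct and follows the same route as the paper: reduce to Theorem~\ref{thm:condition_S} and exploit the structure of $\gamma\otimes T$ when $[\rho,\sigma]=0$. The paper's own proof is extremely terse---it only records that $\gamma\otimes T$ is \emph{Hermitian} and then asserts the equivalence $B\succeq 0 \Longleftrightarrow [M_\sigma,\gamma\otimes T]=0$. Your argument is in fact more complete: for the converse direction (commutator vanishes $\Rightarrow B\succeq 0$), Hermiticity of $\gamma\otimes T$ alone is insufficient, since the product of two commuting Hermitian operators need not be PSD; one genuinely needs $\gamma\otimes T\succeq 0$, which you establish via the partial-trace cyclicity trick $T=\tr_\LL\bigl((\sqrt{\gamma}\otimes I_\KK)\sqrt{M_\sigma}(\sqrt{\gamma}\otimes I_\KK)\bigr)$. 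So your version closes a small gap that the paper's one-line proof leaves implicit.
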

\begin{proof}
    If $[\rho,\sigma] = 0$, then the operator $\gamma\otimes T=\gamma\otimes \tr_\LL\left(\left(\gamma^\dagger\otimes I_\KK\right)\sqrt{M_\sigma}\right)$ is Hermitian. Given that the optimality condition in Theorem~\ref{thm:condition_S} is $B\succeq 0$, this condition is equivalent to $\left[M_\sigma, \gamma\otimes T\right] = 0$ in this special case.
\end{proof}
\begin{corollary}\label{cor:M_sigma_S}
    Let $M_\sigma$ be the QEC matrix associated with the reference state $\sigma$ and the noise channel $\EE$, and $\KK = \bigoplus_s\KK_s$. If $\sqrt{M_\sigma} = \bigoplus_s \beta_s$, where $\beta_s\in\Pos(\LL\otimes \KK_s)$ satisfying $\tr_\LL \beta_s \propto I_{\KK_s}$, then $\RR = \RPetz$ optimizes $F_e(\sqrt{\sigma},\RR\circ\EE)$.
\end{corollary}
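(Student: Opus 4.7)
The plan is to deduce the corollary directly from Theorem~\ref{thm:condition_S} by verifying $B\succeq 0$ under the stated block structure, specialized to the input $\rho=\sqrt{\sigma}$.

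First I would unpack the specialization $\rho=\sqrt{\sigma}$. In this case $\gamma=\sigma^{-1/2}\rho=\Proj_\sigma$ is the projector onto $\supp(\sigma)$, which is Hermitian and commutes with $\sqrt{\sigma}$. From the definition $M_\sigma=(\sqrt{\sigma}\otimes I_\KK)M(\sqrt{\sigma}\otimes I_\KK)$ one has $\supp(M_\sigma)\subseteq\supp(\sigma)\otimes\KK$, hence each block $\beta_s$ in the hypothesis $\sqrt{M_\sigma}=\bigoplus_s\beta_s$ satisfies $(\Proj_\sigma\otimes I_{\KK_s})\beta_s=\beta_s(\Proj_\sigma\otimes I_{\KK_s})=\beta_s$.

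Next I would compute $T$ using the block decomposition. By Eq.~\eqref{eq:T_def_S},
\begin{equation}
T=\tr_\LL\!\left((\gamma^\dagger\otimes I_\KK)\sqrt{M_\sigma}\right)=\tr_\LL\!\left(\bigoplus_s(\Proj_\sigma\otimes I_{\KK_s})\beta_s\right)=\bigoplus_s\tr_\LL\beta_s=\bigoplus_s c_s I_{\KK_s},
\end{equation}
where the assumption $\tr_\LL\beta_s\propto I_{\KK_s}$ fixes nonnegative scalars $c_s\ge 0$. Thus $T$ itself is block diagonal on $\KK=\bigoplus_s\KK_s$ and is a PSD multiple of the identity on each block.

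Then I would evaluate $B$ and check positivity. Since $\gamma\otimes T=\Proj_\sigma\otimes\bigoplus_s c_s I_{\KK_s}=\bigoplus_s c_s(\Proj_\sigma\otimes I_{\KK_s})$ respects the same block decomposition,
\begin{equation}
B=\sqrt{M_\sigma}(\gamma\otimes T)=\bigoplus_s c_s\,\beta_s(\Proj_\sigma\otimes I_{\KK_s})=\bigoplus_s c_s\,\beta_s\succeq 0,
\end{equation}
because each $\beta_s$ is PSD by hypothesis and each $c_s\ge 0$. Applying Theorem~\ref{thm:condition_S} then immediately yields that $\RPetz$ optimizes $F_e(\sqrt{\sigma},\RR\circ\EE)$.

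The only delicate step is making sure the direct-sum structure carries through $\gamma\otimes T$ consistently; this hinges on the observation that $\gamma=\Proj_\sigma$ acts only on $\LL$ while $T$ is block diagonal on $\KK$, so their tensor product factors through the same decomposition $\KK=\bigoplus_s\KK_s$ used for $\sqrt{M_\sigma}$. Beyond this bookkeeping the argument is a one-line consequence of Theorem~\ref{thm:condition_S}, so I do not anticipate serious obstacles.
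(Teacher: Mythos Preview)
Your proof is correct and follows essentially the same line as the paper's: both identify $T=\tr_\LL\sqrt{M_\sigma}=\bigoplus_s c_s I_{\KK_s}$ from the block hypothesis and then verify the optimality criterion blockwise. The only cosmetic difference is that the paper invokes Corollary~\ref{cor:rhosigma_comm_S} (the commutator form $[M_\sigma,I_\LL\otimes\tr_\LL\sqrt{M_\sigma}]=0$, available since $[\sqrt{\sigma},\sigma]=0$) rather than checking $B\succeq 0$ directly via Theorem~\ref{thm:condition_S}; your route is equally valid and arguably slightly more self-contained.
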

\begin{proof}
    When the input state of $F_e$ is $\rho = \sqrt{\sigma}$. The optimality condition (Eq.~\eqref{eq:commutator_eta_S}) reduces to $\left[M_\sigma,I_\LL\otimes \tr_\LL\sqrt{M_\sigma}\right]=0$. Meanwhile, since $\tr_\LL\sqrt{M_\sigma} = \bigoplus_s \left(\tr_\LL \beta_s\right) I_{\KK_s}$ must commute with $\sqrt{M_\sigma} = \bigoplus_s \beta_s$, it follows that the Petz map $\RR = \RPetz$ must attain optimality with respect to the objective function $F_e(\sqrt{\sigma}, \RR\circ\EE)$.
\end{proof}

\section{Optimality of the Petz Map: Analytical Examples}
\subsection{The Example with $B\succeq 0$ and $[\rho,\sigma]\ne 0$}
In this section, we construct a model in which the triplet $(\rho, \sigma, \EE)$ satisfies $B \succeq 0$ but does not satisfy $[\rho, \sigma] = 0$. According to Theorem~\ref{thm:condition_S}, the Petz map $\RR = \RPetz$ achieves optimality with respect to the objective function $F_e(\rho, \RR \circ \EE)$, despite failing to satisfy the generalized commutation relation in Eq.~\eqref{eq:commutator_eta_S}.

We consider both $\LL$ and $\KK$ to be partitioned into orthogonal subspaces:
\begin{align}
    \LL &= \bigoplus_{s} \LL_s\;,\quad
    \KK = \bigoplus_{s} \KK_s\;.
\end{align}
Let $\rho$ and $\sigma$ be density matrices in $\Pos(\LL)$ that satisfy $\tr \rho = \tr \sigma = 1$, without any commutation relation imposed.
We assume that the input state $\rho$ and the reference state $\sigma$ can be decomposed as direct sums: 
\begin{align}
    \rho &= \sum_{s} \rho_s,\quad
    \sigma = \sum_{s} \sigma_s,\quad \supp(\rho_s)\subseteq \supp(\sigma_s)\subseteq \LL_s \;.
\end{align}
Consequently, $\gamma$ also possesses a direct sum structure:
\begin{equation}
    \gamma = \sigma^{-\frac{1}{2}}\rho = \sum_{s} \sigma^{-\frac{1}{2}}_s\rho_s\;.
\end{equation}
Since $[\rho, \sigma] = 0$ is not required, $\gamma$ may be non-Hermitian.
We consider the noise channel $\EE$ with the following QEC matrix:
\begin{align}\label{eq:E_SLcond_directsum}
    M = \sum_{s} I_{\LL_s}\otimes \alpha_s,\quad \supp(\alpha_s)\subseteq \KK_s\;,
\end{align}
where $\alpha_s\succeq 0$, and $\tr\alpha_s = 1$ is required to ensure that $\tr_\KK M = I_\LL$. Since 
\begin{equation}
    \sqrt{M_\sigma} = \sum_{s} \sqrt{\sigma_s}\otimes \sqrt{\alpha_s}\;,
\end{equation}
we obtain $T\equiv\tr_\LL\left(\left(\gamma^\dagger\otimes I_\KK\right) \sqrt{M_\sigma}\right) = \sum_s (\tr\rho_s)\sqrt{\alpha_s}$, so
\begin{align}
    B &= \sqrt{M_\sigma}\left(\gamma\otimes T\right) 
    = \sum_{s} \left(\tr\rho_s\right)\rho_s \otimes \alpha_s \succeq 0\;.
\end{align}
Moreover, in this case, the entanglement fidelity $F_e(\rho,  \RPetz\circ \EE)$ (Eq.~\eqref{eq:ent_fid_petz_S}) is given by 
\begin{align}
F_e(\rho,  \RPetz\circ \EE) &= \left\|T\right\|_F^2 = \sum_{s}(\tr\rho_s)^2 \le 1\;.
\end{align}
Notably, when $\LL \equiv \LL_1$ and $\KK \equiv \KK_1$, the KL conditions are satisfied, immediately yielding the familiar result with $M = I_\LL \otimes \alpha$, $B = \rho \otimes \alpha$, and $F_e = 1$.

\subsection{The Example with $B\succeq 0$ and $[\rho,\sigma] = 0$, but $\rho\ne \sqrt{\sigma}$}
This subsection presents a simple example with $\LL = \mathbb{C}^2$ and $\KK = \mathbb{C}^2$, in which Eq.~\eqref{eq:commutator_eta_S} is satisfied but the optimality condition in~\cite{Iten2017TIT}:
\begin{equation}\label{eq:commutator_iten_S}
    \left[M_\sigma,I_\LL\otimes \tr_\LL\sqrt{M_\sigma}\right]=0
\end{equation}
is violated. Such an example is analytically constructed by the following parameterization of $\sigma$, $\gamma$, and $M_\sigma$:
\begin{equation}
    \sqrt{\sigma} = \begin{bmatrix}
        a&\\
        &b
    \end{bmatrix},\quad
    \gamma = \begin{bmatrix}
        s&\\
        &t
    \end{bmatrix},\quad
    \sqrt{M_\sigma} = \begin{bmatrix}
    u&&&\\
    &z&y&\\
    &y&x&\\
    &&&v\\
    \end{bmatrix}\;,
\end{equation}
where the following conditions are required due to $\rho, \sigma, \sqrt{M_\sigma} \succeq 0$ and $\tr\sigma = \tr\rho = 1$:
\begin{equation}\label{eq:constr_S}
    a,b,s,t>0,\quad a^2+b^2=1,\quad as +bt = 1\quad xz - y^2>0\;.
\end{equation}
It follows that $M$ can be determined as:
\begin{equation}
    M = \left(\sigma^{-\frac{1}{2}}\otimes I_\KK\right) M_\sigma \left(\sigma^{-\frac{1}{2}}\otimes I_\KK\right)= \begin{bmatrix}
    \frac{u^2}{a^2}&&&\\
    &\frac{y^2+z^2}{a^2}&\frac{(x+z)y}{ab}&\\
    &\frac{(x+z)y}{ab}&\frac{x^2+y^2}{b^2}&\\
    &&&\frac{v^2}{b^2}\\
    \end{bmatrix}\;,
\end{equation}
By the definition of $M$, the Kraus operators ($\hat{E}_i$) of $\EE$ can be derived from the submatrices of $\sqrt{M}$.
Since $M$ acts on $\LL\otimes \KK$, the TP condition $\tr_{\LL}M = I_{\KK}$ is equivalent to
\begin{equation}\label{eq:hypersurf}
\left\{
    \begin{aligned}
    u^2+y^2+z^2 - a^2 &= 0 \\
    v^2+x^2+y^2 - b^2 &= 0 
    \end{aligned}\right.\;. 
\end{equation}
Next, we need to ensure that the commutator in Eq.~\eqref{eq:commutator_eta_S} vanishes. Note that $\sqrt{M_\sigma}$ is not diagonal, while
\begin{align}
    \gamma\otimes T = \gamma\otimes \tr_{\LL}\left(\left(\gamma^\dagger\otimes I_{\KK}\right)\sqrt{M_\sigma}\right)
    &=
    \begin{bmatrix}
        s&\\
        &t
    \end{bmatrix}\otimes
    \begin{bmatrix}
        us + xt&\\
        &zs + vt
    \end{bmatrix}
    \\&= 
    \begin{bmatrix}
    (us + xt)s&&&\\
    &(zs + vt)s&&\\
    &&(us + xt)t&\\
    &&&(zs + vt)t\\
    \end{bmatrix}\;.
\end{align}
Thus, the condition
\begin{equation}\label{eq:hypeplane_1}
    (zs + vt)s = (us + xt)t
\end{equation}
is needed for the commutation relation in Eq.~\eqref{eq:commutator_eta_S} to hold.
By combining Eq.~\eqref{eq:hypersurf}, Eq.~\eqref{eq:hypeplane_1} and the constraints in Eq.~\eqref{eq:constr_S}, we can explicitly determine the undetermined parameters as follows:
\begin{align}
    b &= \sqrt{1-a^2},\quad s = \frac{1-bt}{a},\quad v = \sqrt{b^2 - x^2 - y^2},\quad z = \alpha u + \beta,\quad p_2 u^2 + p_1 u + p_0 = 0\;.
\end{align}
where $p_2 := 1+\alpha^2$, $p_1 := 2\alpha\beta$, $p_0 := \beta^2 - a^2 + y^2$, $\alpha := t/s$, and  $\beta := x \alpha^2 - v \alpha$.
The values of $a$, $x$, $y$, and $t$ can be adjusted within a wide range to yield many feasible solutions. For instance, if $a=\sqrt{0.3}$, $t = 1$, $x = 0.25$ and $y = 0.08$, then we have $b = \sqrt{0.7}$, $s \approx 0.298217$, $z \approx 0.529706$, $u \approx 0.114068$ and $v \approx 0.794418$. It can be verified that
\begin{equation}
    \max_{\RR} F_{e}(\rho,\RR\circ\EE) = F_{e}(\rho,\RRP_{\sigma,\EE} \circ\EE) = \left\|T\right\|_F^2\approx 0.987703\;.
\end{equation} 
In this case, we have ensured that $\rho \ne \sigma$, Eq.~\eqref{eq:commutator_iten_S} is violated, and Eq.~\eqref{eq:commutator_eta_S} is satisfied. 

\subsection{Examples with $\rho = \sigma = I_\LL/d$}
In this section, we present examples in which the KL conditions are violated, yet the Petz map still optimizes the channel fidelity for the case $\rho = \sigma = I_\LL/d$, where $d \equiv \dim \LL$. In this special case, the Petz map is referred to as the transpose channel (TC), and the entanglement fidelity is sometimes called the channel fidelity.
For simplicity, we denote $\RR^{\mathrm{TC}}:=\RRP_{I_\LL,\EE}$, $\hat{R}^{\mathrm{TC}}_k:=(\RP_{I_\LL,\EE})_k$, and $F^{\mathrm{TC}}_e = F_e(I_\LL/d,\RR^{\mathrm{TC}}\circ\EE)$. The optimal channel fidelity is denoted as $F^{\mathrm{op}}_e = \max_\RR F_e(I_\LL/d, \RR\circ\EE)$.

\subsubsection{A Toy Model}
We present a simple example where the TC is always optimal. Consider the following CPTP map $\mathcal{E}\sim\{\hat{E}_k\}$, with $\hat{E}_k:\mathbb{C}^2\to \mathbb{C}^3$:
\begin{equation}
    \hat{E}_1 = c
    \begin{bmatrix}
        a&0\\
        0&b\\
        0&0\\
    \end{bmatrix},\quad
    \hat{E}_2 = c
    \begin{bmatrix}
        0&a\\
        0&0\\
        b&0\\
    \end{bmatrix}\;,
\end{equation}
where $a,b\in \mathbb{R}^+$ and the normalization factor is $c:=(a^2+b^2)^{-\frac{1}{2}}$.
Accordingly, the QEC matrix and its relevant partial trace are given by
\begin{equation}
    M = c^2
    \begin{bmatrix}
        a^2&&&a^2\\
        &b^2&&\\
        &&b^2&\\
        a^2&&&a^2
    \end{bmatrix}
    \;\;
    \Longrightarrow
    \;\;
    \sqrt{M}
    = c
    \begin{bmatrix}
        \frac{a}{\sqrt{2}}&&&\frac{a}{\sqrt{2}}\\
        &b&&\\
        &&b&\\
        \frac{a}{\sqrt{2}}&&&\frac{a}{\sqrt{2}}
    \end{bmatrix}
    \;\;
    \Longrightarrow
    \;\;
    \tr_\LL\sqrt{M} = c\left(\frac{a}{\sqrt{2}} + b\right) I_2\;\;,
\end{equation}
which must yield $[M,I_\LL\otimes \tr_\LL\sqrt{M}] = 0$.
As an optimal recovery map, the TC can be represented by
\begin{equation}
    \hat{R}^{\text{TC}}_1=
    \begin{bmatrix}
        \frac{1}{\sqrt{2}}&0&0\\
        0&1&0
    \end{bmatrix},\quad
    \hat{R}^{\text{TC}}_2=
    \begin{bmatrix}
        0&0&1\\
        \frac{1}{\sqrt{2}}&0&0
    \end{bmatrix}\;.
\end{equation}
Hence, we have the optimized channel fidelity $F^{\mathrm{op}}_e=F^{\mathrm{TC}}_e=\frac{2}{4}\left[c\left(\frac{a}{\sqrt{2}} + b\right)\right]^2$.

Additionally, this simple example possesses an interesting complementary channel $\mathcal{E}^{\text{c}}$. Since $M=M^T$ is the Choi matrix for $\mathcal{E}^{\text{c}}$, $\mathcal{E}^{\text{c}}$ is an entanglement-breaking channel if and only if $M$ is separable, which occurs at $b \le a$. 
In particular, if $b = \sqrt{2} a$, we simultaneously have $[M,(I_\LL\otimes \tr_\LL\sqrt{M})] = 0$ and $[C_\mathcal{E},(I_\LL\otimes \tr_\LL\sqrt{C_\mathcal{E}})] = 0$, where $C_\mathcal{E}$ is the Choi matrix for $\mathcal{E}$. In this case, the TC of the complementary channel is also optimal.

\subsubsection{The Pauli Channel}
We consider the Pauli channel of a general \textit{qudit} state ($d\ge 2$), where $\LL = \HH = \mathbb{C}^d$. In other words, the errors consist of pure logical Pauli errors.
The Kraus representation of such a Pauli channel is
\begin{equation}
    \mathcal{E}\sim \left\{\sqrt{p_g} g: g\in \PP,p_g\ge0,\sum_g p_g = 1\right\}\;,
\end{equation}
where $\PP=\langle X,Z\rangle/U(1)$ is the generalized Pauli group (modulo global phase). The generators $Z$ and $X$ are represented by the following ``clock and shift matrices'':
\begin{equation*}
    Z_{ab} = e^{i\frac{2\pi a}{d}}\delta_{ab}\;,\quad
    X_{ab} = \delta_{a\oplus 1, b}\;,
    \quad a,b\in\{0,1,\cdots, d-1\}\;,
\end{equation*}
where $a\oplus 1 \equiv a + 1\text{ mod }d$\;.
To find the QEC matrix and its powers, we first define the following $d$-by-$|\PP|d$ matrix:
\begin{equation}\label{eq:E_S}
    E:=\left[\hat{E}_1, \hat{E}_2,\cdots,\hat{E}_{|\PP|}\right]=\left[
        \sqrt{p_{g_1}} g_1, \sqrt{p_{g_2}} g_2, \cdots, \sqrt{p_{g_{|\PP|}}} g_{|\PP|}
    \right],\quad g_i\in\PP.
\end{equation}
For instance, in the special case of $d=2$, where $\PP=\{I,X,Y,Z\}$ with $2\times 2$ Pauli matrices, we have
\begin{equation*}
    E = \left[
    \begin{array}{cc|cc|cc|cc}
        \sqrt{p_I}&0&0&\sqrt{p_X}&0&-i\sqrt{p_Y}&\sqrt{p_Z}&0 \\
       0&\sqrt{p_I}&\sqrt{p_X}&0&i\sqrt{p_Y}&0&0&-\sqrt{p_Z} \\
    \end{array}\right]\;.
\end{equation*}
Eq.~\eqref{eq:E_S} always satisfies $I_\HH = EE^\dagger$ due to $g_ig_i^\dagger = I_\LL$ and $\sum_g p_g = 1$. Thus, the QEC matrix $M = E^\dagger E$ happens to be a projector:
\begin{equation*}
    M^{p} = E^\dagger (EE^\dagger)^{p-1}E = E^\dagger E = M\;,\quad \forall p\in \mathbb{R}\;,
\end{equation*}
(For convenience, we have swapped the subsystems $\LL$ and $\KK$ that $M$ is acting on.)
According to Eq.~\eqref{eq:R_petz_penrose} and $\sigma = I_\LL/d$, we find that
\begin{equation}\label{eq:E_StoPsi}
    \left[\hat{R}^{\mathrm{TC}\dagger}_1,\hat{R}^{\mathrm{TC}\dagger}_2,\cdots,\hat{R}^{\mathrm{TC}\dagger}_{|\PP|}\right] = E M^{-\frac{1}{2}} = EM = EE^\dagger E = E\;.
\end{equation}
Therefore, the TC of $\mathcal{E}$ is exactly its conjugate
\begin{equation}
    \mathcal{R}^{\text{TC}} = \mathcal{E}^\dagger\;.
\end{equation}
Note that the character $\tr g_i = 0$ for all $g_i\in \PP\setminus\{I_\LL\}$, implying that the off-diagonal blocks of $\sqrt{M}$ are trace-less. Consequently, we obtain $\tr_\LL\sqrt{M} = \tr_\LL M = d\cdot \mathrm{diag}\left(p_{g_1},p_{g_2},\cdots,p_{g_{|\PP|}}\right)$.
Thus, $[M,I_\LL\otimes \tr_\LL\sqrt{M}]=0$ holds if
\begin{equation}\label{eq:pgPauli}
    p_{g} = \left\{
    \begin{array}{cc}
        1/|\mathcal{S}|, \quad & g\in\mathcal{S}\\
        0,\quad & g\notin\mathcal{S}
    \end{array}
    \right.\;,
\end{equation}
where $\mathcal{S}\subseteq\PP$ is any non-empty set. Accordingly, the optimal channel fidelity is given by
\begin{equation*}
    F^{\mathrm{op}}_e = F^{\mathrm{TC}}_e =
    \frac{1}{d^2}\left\|\tr_\LL\sqrt{M}\right\|^2_F=\frac{1}{d^2}\sum_{g\in \mathcal{S}} \left(\frac{d}{|\mathcal{S}|}\right)^2 = \frac{1}{|\mathcal{S}|}\;.
\end{equation*} 
In the above case, no recovery (i.e. trivially applying the identity map as $\RR$) is also the optimal recovery, since $F^{\text{id}}_{} = \frac{1}{d^2} |\tr(I_\LL/\sqrt{|\mathcal{S}|})|^2 = 1/|\mathcal{S}| = F^{\mathrm{op}}_e$.
We present some examples of Pauli channels satisfying Eq.~\eqref{eq:pgPauli} as follows:
\begin{itemize}
    \item  $\mathcal{E}$ is the dephasing channel, with $\mathcal{S}=\{ I, Z,\cdots, Z^s\}$ for any $1\le s \le d-1$.
    \item $\mathcal{E}$ is the completely depolarizing channel, which corresponds to $\mathcal{S}=\PP\Longrightarrow |\mathcal{S}| = d^{2n}$. 
\end{itemize}

\subsubsection{The Classical-to-Classical Channel}
\label{sec:c2cchn}
Given discrete sets $\XX$ and $\YY$,
the classical-to-classical~(C2C) channel is a quantum channel that maps a classical state $\sum_{x\in\XX} p_\XX(x)\ket{x}\!\bra{x}$ to another classical state $\sum_{y\in\YY} p_{\YY}(y)\ket{y}\!\bra{y}$.
The Kraus representation of the C2C channel is given by the expression~\cite{wilde2017quantum}:
\begin{equation}\label{eq:Kraus_C2C}
    \mathcal{E}^{\text{C2C}}\sim\left\{\sqrt{p_{\YY|\XX}(y|x)}\ket{y}\!\bra{x}: x\in \XX, y\in \YY\right\}
\end{equation}
where $p_{\YY|\XX}$ is the conditional probability distribution describing a classical channel $\XX\to \YY$, which is normalized such that $\sum_y p_{\YY|\XX}(y|x) = 1$ for any $x$.
In Eq.~\eqref{eq:Kraus_C2C}, the quantum states $\ket{x}$ and $\ket{y}$ represent the orthonormal bases for the input Hilbert space $\LL = \mathbb{C}^{|\XX|}$ and the output Hilbert space $\HH = \mathbb{C}^{|\YY|}$, respectively. 
Since there are $n_K = |\XX||\YY|$ independent Kraus operators, it follows that $\KK = \mathbb{C}^{|\XX||\YY|}$.

We find that the TC is optimal in the following cases of C2C channels:
\begin{itemize}
    \item If $p_{\YY|\XX}(y|x_1)p_{\YY|\XX}(y|x_2) \propto \delta_{x_1 x_2}$, for all $y\in \YY$ and $x_i\in \XX$, then $\mathcal{E}^{\text{C2C}}$ results in a diagonal $M$. If $y$ is viewed as the corrupted classical codeword of the logical input $x$, then this condition implies that the classical information is distinguishable (and thus perfectly recoverable) from the corrupted output. To find the optimal channel fidelity given by the TC, we observe that $\tr_\LL\sqrt{M}$ is an $n_K$-by-$n_K$ diagonal matrix:
    \begin{equation}
        \tr_\LL\sqrt{M} = \mathrm{diag}\left(\sqrt{p_{\YY|\XX}(y_1|x_1)},\sqrt{p_{\YY|\XX}(y_2|x_1)},\cdots,\sqrt{p_{\YY|\XX}(y_{|\YY|}|x_{|\XX|})}\right)\;,
    \end{equation}
    so we have
    \begin{align}
            F^{\mathrm{TC}}_e &= \frac{1}{|\XX|^2}\left\|\tr_\LL\sqrt{M}\right\|_F^2 \label{eq:perfectEC_S_1}\\
             &= \frac{1}{|\XX|^2}\sum_{x,y}\left(\sqrt{p_{\YY|\XX}(y|x)}\right)^2 \label{eq:perfectEC_S_3}\\
             &= \frac{1}{|\XX|^2} \sum_{x} 1 
             = \frac{1}{|\XX|}\;, \label{eq:perfectEC_S_2}
    \end{align}
    where Eq.~\eqref{eq:perfectEC_S_1} follows from~\cite{zheng2024nearoptimal} and Eq.~\eqref{eq:perfectEC_S_2} uses the normalization condition $\sum_y p_{\YY|\XX}(y|x) = 1$.
    \item When $p_{\YY|\XX}(y|x) = p_{\YY}(y)$, implying that $y$ is independent of $x$. In this case, the channel maps an arbitrary input state to a fixed classical state $\sum_y p_{\YY}(y)\ket{y}\!\bra{y}$. Since $\ket{y}$ is orthogonal in the Kraus operator in Eq.~\eqref{eq:Kraus_C2C}, the QEC matrix $M$ is block-diagonal: 
    \begin{equation}
        M = \bigoplus_{y\in\YY} p_{\YY}(y)\mathcal{B}\;,
    \end{equation}
    where $\mathcal{B}$ is an $|\XX|^2$-by-$|\XX|^2$ matrix independent of $y$, since $\XX$ is decoupled from $\YY$. Specifically, $(\mathcal{B})_{[x_1x_2],[x'_1x'_2]} = \delta_{x_1x_2}\delta_{x'_1x'_2}$, so $\sqrt{\mathcal{B}} = |\XX|^{-\frac{1}{2}}\mathcal{B}\Rightarrow\tr_\LL\sqrt{\mathcal{B}}=|\XX|^{-\frac{1}{2}} I_{|\XX|}$. Due to the blockwise structure of $M$, we have $[M,I_\LL\otimes \tr_\LL\sqrt{M}]=0$. Thus, the optimal channel fidelity in this case is given by 
    \begin{equation}
        \begin{aligned}
            F^{\mathrm{TC}}_e &= \frac{1}{|\XX|^2}\left\|\tr_\LL\sqrt{M}\right\|_F^2 = \frac{1}{|\XX|^2}\bigg\|\bigoplus_{y\in\YY} \sqrt{p_{\YY}(y)} |\XX|^{-\frac{1}{2}} I_{|\XX|}\bigg\|_F^2 \\ 
            &=\frac{1}{|\XX|^2}\sum_{x,y}\left(\sqrt{p_{\YY}(y)}|\XX|^{-\frac{1}{2}}\right)^2 = \frac{1}{|\XX|}\;.
        \end{aligned}
    \end{equation}
\end{itemize}

\section{The Quantum Transduction Example}
\begin{figure}[b]
    \centering
    \includegraphics[width=0.8\linewidth]{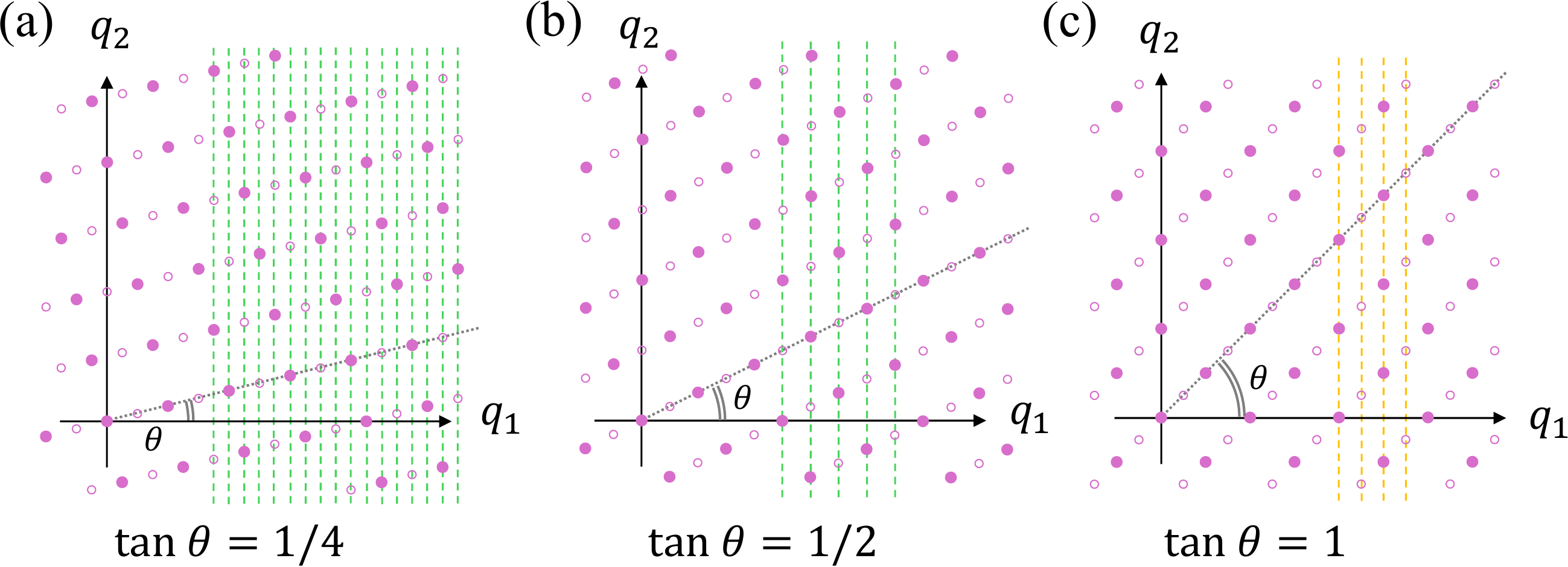}
    \caption{Each of these figures illustrates the 2D wavefunctions $\psi_{0;\Delta}(q_1,q_2)$ (filled circles) and $\psi_{1;\Delta}(q_1,q_2)$ (hollow circles) in the limit $\Delta \to 0$. The rotational angle $\theta = gt$ characterizes the interaction time between the two bosonic modes. In panels (a) and (b), the green dashed lines indicate the identical marginal distribution on the horizontal axis, given the lattice is infinite. For comparison, panel (c) displays distinguishable marginal distributions by yellow dashed lines. The values of transmissivity $\eta$ in these panels are, respectively, $\frac{1}{17}$, $\frac{1}{5}$ and $\frac{1}{2}$.}
    \label{fig:gkp_rotated_S}
\end{figure}
This section explains the details of the numerical test for the Gottesman-Kitaev-Preskill~(GKP) code used in the main text. The system of interest involves two bosonic modes $1$ and $2$, whose annihilation operators are $\hat{a}_1$ and $\hat{a}_2$, respectively. Their interaction is given by the unitary $U(t) = e^{-it\hat{H}}$, where $\hat{H} = -ig(\hat{a}^\dagger_1\hat{a}_2 - \hat{a}_1\hat{a}^\dagger_2)$ and $t$ is the interaction time. Such an interaction gives the transmissivity $\eta = \sin^2 gt$.
The two modes are initialized using the GKP encoding. To simplify our model, we consider only the square-lattice qubit GKP code~\cite{GKP2001}, where the logical space is defined by states $\ket{0_\Delta}$ and $\ket{1_\Delta}$, whose unnormalized wavefunction on the $q_i$-quadrature ($\hat{q}_i:=\frac{\hat{a}_i + \hat{a}_i^\dagger}{\sqrt{2}}$) are given as follows
\begin{align}
    \langle q_i \ket{0_\Delta} &\propto \sum_{s = -\infty}^{\infty} e^{-2\pi\Delta^2 s^2} e^{-(q_i - 2s\sqrt{\pi})^2/(2\Delta^2)}\;, \\
    \langle q_i \ket{1_\Delta} &\propto \sum_{s = -\infty}^{\infty} e^{-2\pi\Delta^2 s^2} e^{-(q_i - (2s+1)\sqrt{\pi})^2/(2\Delta^2)}\;.
\end{align}
The parameter $\Delta$ ($|\Delta|<1$) characterizes the expectation value of the energy of each mode as $\hbar\omega_i(\langle\hat{a}^\dagger_i\hat{a}_i\rangle + 1/2) \approx \hbar\omega_i/(2\Delta^2)$. When $\Delta \to 0$, both $\langle q_i\ket{0_\Delta}$ and $\langle q_i\ket{1_\Delta}$ approach the ``Dirac comb'' asymptotically with spacing $2\sqrt{\pi}$, with a relative shift of $\sqrt{\pi}$ along the $q_i$-quadrature.

In this work, we present the special case where system $1$ encodes a logical qubit as $\ket{\psi_\Delta}:=c_0 \ket{0_\Delta} + c_1 \ket{1_\Delta}$, and the state of system $2$ is initialized as $\ket{0_\Delta}$. Let $\psi_{\mu;\Delta}(q_1,q_2):=\bra{q_1,q_2}U(t)\ket{\mu_\Delta}\ket{0_\Delta}$, when $\Delta \to 0$, the functions $\psi_{0;{0}}$ and $\psi_{1;{0}}$ can be represented by two 2D Dirac combs on the $q_1$-$q_2$ plane, rotated by an angle $\theta = gt$.
As displayed in Fig.~\ref{fig:gkp_rotated_S}, the coordinates of the non-zero values of $\psi_{0;{0}}$ and $\psi_{1;{0}}$ are shown as filled and hollow circles, respectively. 
From the geometric relation in Fig.~\ref{fig:gkp_rotated_S}(a, b), it is clear that the marginal distributions of $\psi_{0;{0}}$ and $\psi_{1;{0}}$ on the $q_1$-quadrature are indistinguishable if $\tan\theta = \frac{2m_1+1}{2m_2}$ $(m_i\in\mathbb{Z}, m_2\ne 0)$. In other words, the reduced density matrix (marginal density matrix) on system $1$ after applying $U(t)$ does not reveal any logical information regarding $c_0$ and $c_1$, which implies the existence of perfect recovery of the encoded state from system $2$ when $\Delta \to 0$ and $\theta = \arctan \frac{2m_1+1}{2m_2}$.
In contrast, as shown in Fig.~\ref{fig:gkp_rotated_S}(c), the distinguishable marginal distributions of $\psi_{0;{0}}$ and $\psi_{1;{0}}$ on the $q_1$-quadrature reveal the ratio between $|c_0|^2$ and $|c_1|^2$, resulting in $F<1$ when recovering the encoded state from system $2$. (In fact, the symmetrical marginal distribution across both quadratures in Fig.~\ref{fig:gkp_rotated_S}(c) implies zero coherent information regardless of the logical state input.) 

The set of dashed lines in Fig.~\ref{fig:gkp_rotated_S} (in either panel (a),(b), or (c)) corresponds to the maximal set of independent Kraus operators $\hat{E}_k$ for the channel $\mathcal{E}_\eta$. They satisfy the condition $\hat{E}_k^\dagger \hat{E}_\ell=0$ if $k\ne \ell$, so $[M_\sigma, I_\LL\otimes\tr_\LL\sqrt{M}_\sigma] = 0$ for any $\sigma\in\Pos(\LL)$, which implies the optimality of the transpose channel (even if Fig.~\ref{fig:gkp_rotated_S}(c) gives $F_e<1$).

There would be more independent Kraus operators in a physical scenario where the energy is finite ($\Delta > 0$). The $\hat{E}_k^\dagger \hat{E}_\ell=0$ condition is slightly violated, depending on the width of the highly-squeezed peaks.
To obtain the numerical results in the main text, we approximate the joint continuous-variable system $\HH\otimes \KK = \mathbb{C}^{n}\otimes \mathbb{C}^{n}$ with a large $n\sim 10^3$. It requires $\mathcal{O}(n^3)$ steps to obtain $n_K<n$ dominant and independent Kraus operators (for instance, ignoring those Kraus operators with $\|\hat{E}_k\|_2\le 10^{-5}$). Subsequently, the QEC matrix $M$ is found with complexity $\mathcal{O}(n_K^2n)$, which can be used to reconstruct the effective Choi matrix $C'_{\mathcal{E}_\eta}$ of smaller size and to calculate $F^{\mathrm{TC}}_e$ and the commutator norm. The $F^{\mathrm{op}}_e$ is obtained by feeding $C'_{\mathcal{E}_\eta}$ into the algorithm in~\cite{Reimpell_Werner_2005PRL}, with approximately $n_{\text{iter}}\sim30$ iterations.

\end{document}